\newtheorem{lemma}{Lemma}
\newtheorem{cor}{Corollary}
\newtheorem{prop}{Proposition}
\newtheorem{rem}{Remark}
\begin{document}

\title{Automated Market Making: the case of Pegged Assets}

\author{Philippe \textsc{Bergault}\footnote{Université Paris Dauphine-PSL, Ceremade, 75116, Paris, France, bergault@ceremade.dauphine.fr.} \and Louis \textsc{Bertucci}\footnote{Institut Louis Bachelier, 75002, Paris, France, louis.bertucci@institutlouisbachelier.org.} \and David \textsc{Bouba}\footnote{Swaap Labs, d@swaap.finance.} \and Olivier \textsc{Guéant}\footnote{Université Paris 1 Panthéon-Sorbonne, UFR 27 Mathématiques et Informatique, Centre d'Economie de la Sorbonne, Paris, France, olivier.gueant@univ-paris1.fr.} \and Julien \textsc{Guilbert}\footnote{Swaap Labs, julien@swaap.finance.}}
\date{}

\maketitle
\setlength\parindent{0pt}

\begin{abstract}

In this paper, we introduce a novel framework to model the exchange rate dynamics between two intrinsically linked cryptoassets, such as stablecoins pegged to the same fiat currency or a liquid staking token and its associated native token. Our approach employs multi-level nested Ornstein-Uhlenbeck (OU) processes, for which we derive key properties and develop calibration and filtering techniques. Then, we design an automated market maker (AMM) model specifically tailored for the swapping of closely related cryptoassets. Distinct from existing models, our AMM leverages the unique exchange rate dynamics provided by the multi-level nested OU processes, enabling more precise risk management and enhanced liquidity provision. We validate the model through numerical simulations using real-world data for the USDC/USDT and wstETH/WETH pairs, demonstrating that it consistently yields efficient quotes. This approach offers significant potential to improve liquidity in markets for pegged assets.\\

\vspace{3mm}
\noindent{\bf Key words:} Automated Market Making, Decentralized Finance, Stablecoins, Liquid Staking Tokens, Stochastic Optimal Control, Stochastic Filtering.\vspace{8mm}

\end{abstract}

\section{Introduction}

Over the past decade, digital finance and decentralized finance have surged, leading to what can be described, paraphrasing~\cite{maurer2017blockchains}, as a ``Cambrian explosion'' of blockchain platforms, centralized cryptocurrency exchanges, decentralized applications, and, naturally, cryptocurrencies and other cryptoassets (hereafter referred to as cryptos). Today, there is a wide variety of blockchain platforms, each distinguished by its unique design, consensus mechanism, speed and security features, and the extent to which third-party development is possible. Numerous centralized crypto exchanges now compete to attract an ever-growing number of crypto investors, while a multitude of decentralized applications offer tools and services for trading, staking, lending, borrowing, and more. The number of cryptos in circulation is remarkable, with estimates ranging between 10,000 and 20,000 cryptos and a combined market capitalization of approximately 2~T\$ as of September 2024.\footnote{Source: coingecko.com as of September 2024.}\\

The vast majority of cryptos, even more so than traditional financial assets such as stocks or foreign currencies, are subject to significant trends and pronounced volatility. Their dynamics are influenced not only by international economic and geopolitical conditions but also by factors specific to the crypto sector, including its innovations, promises, and scandals. These influences are often further magnified by speculation and what Keynes famously referred to as ``animal spirits.''\\

Although many cryptos exhibit high to extreme volatility against major fiat currencies -- often characterized by sharp rises followed by sudden drops or even inevitable collapses -- not all cryptos behave in this manner. By design, some cryptos are intended to be far more stable and are known as stablecoins. Stablecoins are designed to maintain, at least in principle, a stable value relative to a specific asset or group of assets. Most aim to be a digital equivalent of fiat currencies on various blockchains, with the US dollar being the most prominent example.\footnote{There are also stablecoins associated with other currencies, such as the euro (\euro), though their total market capitalization does not exceed a few hundred million euros. Stablecoins linked to a basket of currencies are also discussed in the literature but are not expected to gain significant traction, according to~\cite{baughman2023global}.} The leading stablecoins by market capitalization are, as of September 2024,\footnote{Source: coingecko.com as of September 2024.} Tether (USDT) at 118 B\$, USD Coin (USDC) at 35 B\$, and Dai (DAI) at 5 B\$, together accounting for nearly 160 B\$. This is followed by a long tail of smaller stablecoins, bringing the total market capitalization to approximately 170 B\$. Given that stablecoins had a total market capitalization of only 3 B\$ in 2019, their increasingly important role in trying to provide a safe haven for crypto investors and bridging traditional finance and digital finance is undeniable.\\

Even when intended to represent the same fiat currency, stablecoins can vary in several important aspects. In terms of collateral, stablecoins can be fiat-collateralized, meaning fully backed by fiat currencies (though in practice, the exact composition of the collateral and exposure to market, credit, liquidity, and operational risks may be more or less transparent), crypto-collateralized, i.e., backed by other cryptos, or undercollateralized, as seen with algorithmic stablecoins, which aim to maintain their peg through arbitrage incentives for users. Collateral management may be either centralized or decentralized (see~\cite{kozhan2021decentralized} for a study of MakerDAO's DAI token, the most prominent decentralized stablecoin). Creation and redemption can be restricted to a list of authorized arbitrageurs (see~\cite{ma2023stablecoin} for a detailed discussion) or decentralized, allowing anyone to mint new tokens upon posting sufficient collateral. For further discussion on stablecoin classifications, see~\cite{hafner2023four, ito2020stablecoin, kahya2021reducing, kwon2023drives}.\\

Confronted with the high and growing market capitalization and daily trading volumes of stablecoins,\footnote{Daily trading volumes for USDT and USDC typically represent 10 to 30\% of their market capitalization.} the variety of mechanisms proposed, and the numerous crashes and temporary de-pegging events -- such as the death spiral that led to the crash of the algorithmic stablecoin TerraUSD in 2022 (see~\cite{liu2023anatomy} and~\cite{uhlig2022luna}), and the temporary de-pegging of USDC for several days in 2023 following the bankruptcy of Silicon Valley Bank, one of the banks holding Circle's cash reserves backing USDC (see~\cite{eichengreen2023stablecoin}) -- stablecoins have garnered significant attention from academics, national and international financial institutions,\footnote{The BIS paper~\cite{kosse2023will} studies 68 stablecoins and shows that none managed to maintain parity with its peg at all times.} and policymakers.\\

Many papers of the recent literature are focused on the risk of so-called stablecoin runs.\footnote{This risk echoes classical issues in finance, such as money market funds ``breaking the buck'' (see~\cite{anadu2023runs},~\cite{havrylchyk2023reglementer}, and~\cite{oefele2023stablecoins} for a general analogy with money market funds, although stablecoins also share properties with exchange-traded funds, see~\cite{ma2023stablecoin}), the system of eurodollars (see~\cite{aldasoro2023money}), and the traditional problem of bank runs if one regards stablecoin issuers as unregulated banks circulating private money in the form of stablecoins, rather than banknotes as in the past.} With a peg comes indeed the risk that the peg breaks and agents sell off their stablecoins en masse due to growing concerns about the system backing the stablecoin. Such a scenario can turn into a self-fulfilling prophecy, driven by both signaling and mechanical effects, as increased sales further widen the gap from the peg. Loss of confidence in a stablecoin can stem from the very mechanism designed to maintain the peg, as seen with algorithmic stablecoins (see, for example,~\cite{adams2022runs}). It can also be triggered by negative news about the collateral, as occurred during the short-lived de-pegging of USDC following the bankruptcy of Silicon Valley Bank (see above). The factors influencing the probability of stablecoin runs are in fact numerous. The effects of commitment mechanisms, collateral, and decentralization are examined in~\cite{d2022can}, while the nontrivial role of transparency in reserve composition is discussed in~\cite{ahmed2024public}. The concentration of arbitrageurs is analyzed in~\cite{ma2023stablecoin}, where a tradeoff is identified between stability around the peg and the risk of runs -- see also~\cite{bertsch2023stablecoins}. Additionally,~\cite{kwon2023drives} uses a game-theoretic framework to demonstrate how price equilibria are shaped by the underlying architecture of the coin.\\

Another topic of interest in the literature is the impact of the success of stablecoins on the broader financial sector.~\cite{barthelemy2021stablecoins} examines the effects of stablecoins on short-term funding markets, while~\cite{flannery2023rise} explores the impact of stablecoin adoption on banks' lending capacity. The effect on the banking sector as a function of its structure is discussed by~\cite{liao2022stablecoins}, and~\cite{bertsch2023stablecoins} addresses various consequences of large-scale stablecoin adoption.\\

Questions regarding compensation for the de-pegging risk borne by stablecoin holders are explored in~\cite{gorton2022leverage}, echoing the debate on regulating stablecoins as securities, which would allow for dividend payments (see~\cite{ma2023stablecoin}). Broader regulatory concerns, along with other legal issues, are addressed in~\cite{arner2020stablecoins},~\cite{bains2022regulating},~\cite{board2023high}, and~\cite{gorton2023taming}.\footnote{The Markets in Crypto-Assets (MiCA) Regulation, aimed at ensuring market integrity and financial stability, includes several points regarding stablecoins, which have been in effect since June 30, 2024.}\\

Many studies focus on factors that explain the average size (in absolute value) of peg deviations (see~\cite{lyons2023keeps} for a notable example). However, models for the dynamics of stablecoins remain relatively uncommon in the literature. Notable stochastic models include those presented in~\cite{klages2021stability},~\cite{klages2022while}, and~\cite{wang2020stablecoins}.\\

In this paper, we do not focus on the specific mechanisms underlying stablecoins. Instead, our emphasis is on modeling the price dynamics of one stablecoin relative to another, with the aim of developing an automated market-making algorithm for swapping two cryptos pegged to the same asset. The model we propose is based on stationary stochastic processes and goes beyond the use of a simple Ornstein-Uhlenbeck (OU) process. Specifically, we utilize multi-level nested OU processes, which are reminiscent of the two-factor Hull-White model used for interest rate term structures (see~\cite{hull1994numerical}). In this framework, the price process mean-reverts towards a secondary process, which itself mean-reverts towards a constant value. For pairs of stablecoins pegged to the same asset, this type of models is, of course, far more realistic than the nonstationary Brownian diffusion models typically used in automated market making for most pairs of (crypto)currencies (see~\cite{barzykin2023dealing} and~\cite{bergault2024automated}). It also improves upon the classical OU model by accommodating more complex dynamics, such as longer-lasting de-pegging phases for one or both stablecoins. The use of a multi-level nested OU process further strengthens the model by mitigating the often exaggerated and detrimental confidence in the mean reversion towards a specific value that is commonly seen in models solely based on a standard OU process.
\\

Our modeling framework also extends to other pairs of cryptos beyond stablecoins, in particular pairs involving a non-rebasing liquid staking token and the underlying native token. In Proof of Stake blockchains, staking involves locking native tokens within the network to support its operations and ensure its security, while earning rewards in return. Traditionally, once native tokens are staked, they become inaccessible, meaning holders cannot use them for other activities, such as participating in decentralized finance (DeFi). Liquid staking, however, allows users to receive a liquid staking token in exchange for the native tokens they stake, providing them with liquidity and enabling participation in other operations while still earning staking rewards. A major example on the Ethereum blockchain is the Lido stETH token, which has a market capitalization of 24 B\$. Other examples on Ethereum include the ether.fi Staked Ether token (eETH) and Mantle Staked Ether (mETH). Some liquid staking tokens also have a non-rebasing, value-accruing counterpart, such as wrapped stETH (wstETH) in the case of the Lido stETH token, which has a market capitalization of nearly 10 B\$. While these assets are not pegged to fiat currencies, they are pegged to the native cryptocurrency of the blockchain (ETH in the above examples), with an interest rate component for value-accruing tokens.\\

By using multi-level nested OU processes to model asset price dynamics, this paper introduces a novel Automated Market Maker (AMM) model specifically tailored for pegged assets. This model draws inspiration from both the recent literature on algorithmic market making in traditional finance and the authors' recent works~\cite{bergault2024automated, bergault2024price}, which proposed price-aware AMMs for pairs of volatile cryptos.\\

The quantitative literature on algorithmic market making began in the 1980s with the pioneering works of Ho and Stoll~\cite{ho1981optimal, ho1983dynamics} and gained renewed interest in 2008 with the influential work of Avellaneda and Stoikov~\cite{avellaneda2008high}, which has since become a key reference in quantitative finance. Building on this foundation,~\cite{gueant2013dealing} extended the analysis by providing closed-form approximations for optimal quotes under exponential intensity models. The expected utility framework of~\cite{avellaneda2008high} was later complemented by the widely used approach introduced in~\cite{cartea2014buy}, where the market maker maximizes expected PnL while penalizing inventory through a running quadratic cost.\footnote{The two frameworks have been extensively compared in~\cite{gueant2017optimal}.} Our approach builds on recent advances in the literature (see~\cite{barzykin2022market, barzykin2023algorithmic, barzykin2023dealing, bergault2021size}).\footnote{For a comprehensive review of market making literature, see the books~\cite{cartea2015algorithmic} and~\cite{gueant2016financial}.} The papers most related to ours include~\cite{barzykin2024algorithmic}, which introduces stationary processes in market making to model the cointegrated dynamics of spot and futures prices for precious metals, and our previous work~\cite{bergault2024automated}, which considers the specifics of automated market making platforms where liquidity providers evaluate excess gains or losses relative to a benchmark (in this case, the Hodl benchmark).\\

This paper is organized as follows. Section 2 presents the dynamic model we consider for the price of a stablecoin relative to another. We discuss parameter estimation and filtering techniques, and explore how the model can be extended to the price dynamics of a non-rebasing liquid staking token in terms of the native token. Section 3 introduces the stochastic optimal control problem used in this paper and demonstrates how to approximate solutions that can be used to design a price-aware automated market maker for pegged assets. Section 4 presents calibration on real-world data and numerical examples. The appendix contains technical results related to stochastic filtering.

\section{A Model for Stablecoin Exchange Rate Dynamics}

\subsection{Multi-level Nested Ornstein-Uhlenbeck Model}

Unlike many cryptos, which are characterized by high volatility and frequent price jumps, stablecoins are specifically designed to maintain stability relative to another asset, most commonly a fiat currency like the US dollar. In practice, popular stablecoins typically oscillate around a fixed peg, as shown in Figures~\ref{fig:USDC} and \ref{fig:USDT}, which display the price trajectories of USDC and USDT in US dollars, respectively. However, even when well-designed, stablecoins inevitably experience temporary periods of de-pegging, often due to a loss of confidence, which can be driven by endogenous factors within the crypto ecosystem or by exogenous events, such as new regulations or, to give a famous example, the bankruptcy of Silicon Valley Bank (SVB).\\

In contrast to many models used in the crypto sphere that primarily focus on capturing high volatility and jump risks, relevant models for the price dynamics of stablecoins must incorporate mean-reverting behavior. In this paper, we do not model the price of stablecoins relative to their associated fiat currencies but rather relative to other stablecoins pegged to the same fiat currency. Such an exchange rate naturally exhibits mean-reverting properties as well.\\

In continuous-time finance, the most widely used mean-reverting stochastic process is the OU process, represented by the following stochastic differential equation for the exchange rate process $(S_t)_{t}$:
\begin{equation*}
    dS_t = -\kappa (S_t - \bar{S}) \, dt + \sigma \, dW^S_t,
\end{equation*}
where \( \kappa \) represents the rate of mean reversion, \( \bar{S} \) is the long-term mean, which reflects the peg, \( \sigma \) is a volatility parameter,\footnote{The term volatility is somewhat misleading for a mean-reverting process, but we will use it throughout the paper.} and \( (W^S_t)_{t} \) is a standard Brownian motion. The issue with the standard OU model is that it often fails to precisely capture the sometimes long-lasting de-pegging phases observed in practice. Furthermore, the OU process tends to introduce an exaggerated confidence in reversion to a fixed constant, which can be detrimental, particularly in an optimal control framework. A solution, already applied in the fixed-income literature by Hull and White \cite{hull1994numerical}, is to relax the assumption of a fixed \( \bar{S} \) by assuming that the mean-reversion target is itself stochastic, but of course unobservable. In this article, we propose that the exchange rate process $(S_t)_{t}$ is governed by a multi-level nested OU process:
\begin{equation}
\begin{cases}
    dS_t = -\kappa \left( S_t - U_t \right) \, dt + \sigma \, dW^S_t, \\
    dU_t = -\eta \left( U_t - \bar{U} \right) \, dt + \nu \, dW^U_t,
\end{cases}
\label{eq:NOU_system}
\end{equation}
where \( \bar{U} \in \mathbb{R}_+^* \) is the long-term average of the two processes, \( \kappa > \eta > 0 \) are the mean reversion rates of the exchange rate towards the current value of the underlying process $(U_t)_{t}$ and of $(U_t)_{t}$ towards \( \bar{U} \), respectively. The parameters \( \sigma, \nu > 0 \) represent the volatility parameters of the processes, and \( \left( W^S_t, W^U_t \right)_{t} \) is a standard bidimensional Brownian motion.

\begin{figure}[!h]
    \centering
    \includegraphics[width=0.83\textwidth]{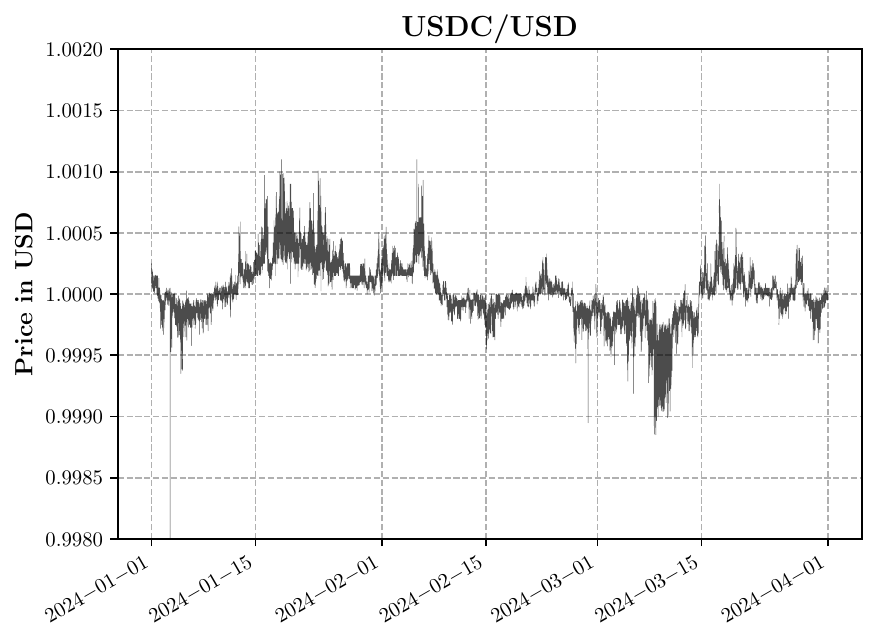}
    \caption{Price of USDC in US dollars between January 1, 2024, and March 31, 2024.}
    \label{fig:USDC}
\end{figure}

\begin{figure}[!h]
    \centering 
    \includegraphics[width=0.83\textwidth]{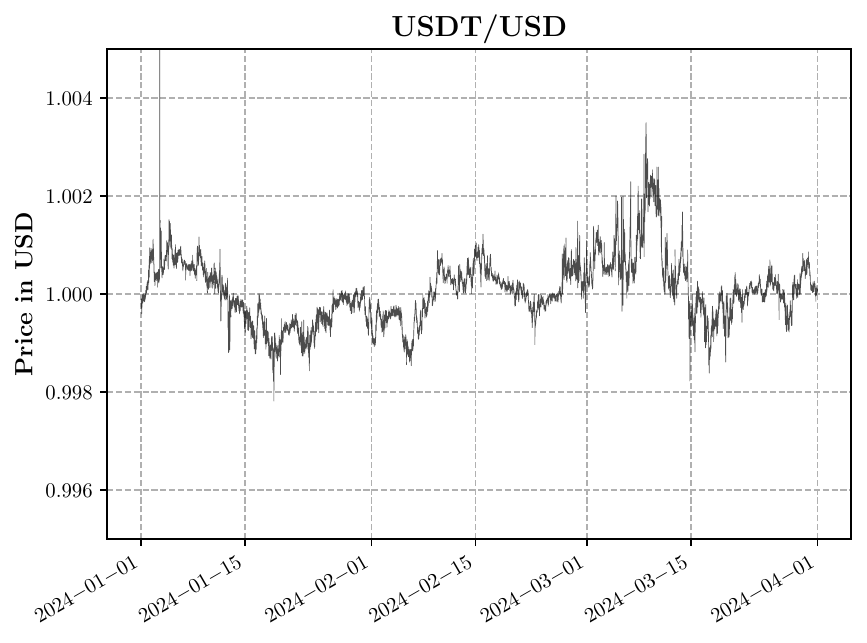}
    \caption{Price of USDT in US dollars between January 1, 2024, and March 31, 2024.}
    \label{fig:USDT}
\end{figure}

\subsection{Estimation of the Parameters}
\label{section:estim-nestedOU}

Using multi-level nested OU processes requires estimating the parameters \(\bar{U}\), \(\kappa\), \(\eta\), \(\sigma\), and \(\nu\) based on observations of the process \((S_t)_{t}\). In what follows, we demonstrate that \((S_t)_{t}\) is a Gaussian process characterized by known mean and covariance function. Consequently, we propose employing a maximum likelihood estimator (MLE) to infer the parameters of the model.\\

Let us start with a proposition stating the expression of $S_t$ as a function of initial values for $S$ and $U$:
\begin{prop}
\label{prop_NOU}
Given initial values \((S_0, U_0)\) at time \(t=0\), the process \((S_t)_{t }\) can be expressed for all \(t \ge 0\) as follows:
\begin{align*}
    S_t &= \bar{U} + (S_0 - \bar{U}) e^{-\kappa t} + \frac{\kappa}{\kappa - \eta} (U_0 - \bar{U}) \left( e^{-\eta t} - e^{-\kappa t} \right) \\
    &\quad + \frac{\kappa}{\kappa - \eta}\nu \int_0^t \left( e^{-\eta (t-s)} - e^{-\kappa (t-s)} \right) dW^U_s + \sigma \int_0^t e^{-\kappa (t-s)} dW^S_s.
\end{align*}
\end{prop}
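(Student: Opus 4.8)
The plan is to exploit the triangular structure of system~\eqref{eq:NOU_system}: the dynamics of $(U_t)_t$ do not depend on $(S_t)_t$, so I would first solve the scalar equation for $U$ and then treat the equation for $S$ as a one-dimensional linear SDE with the known source term $\kappa U_t$.

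\textbf{Step 1: the inner process.} The equation $dU_t = -\eta(U_t - \bar U)\,dt + \nu\,dW^U_t$ is a standard Ornstein--Uhlenbeck equation; applying Itô's formula to $t\mapsto e^{\eta t}(U_t-\bar U)$ gives the classical closed form $U_t = \bar U + (U_0-\bar U)e^{-\eta t} + \nu\int_0^t e^{-\eta(t-s)}\,dW^U_s$. \textbf{Step 2: variation of constants for $S$.} Applying Itô's formula to $t\mapsto e^{\kappa t}S_t$ and using the first line of~\eqref{eq:NOU_system} yields $d(e^{\kappa t}S_t) = \kappa e^{\kappa t}U_t\,dt + \sigma e^{\kappa t}\,dW^S_t$, hence
\[
S_t = S_0 e^{-\kappa t} + \kappa\int_0^t e^{-\kappa(t-u)}U_u\,du + \sigma\int_0^t e^{-\kappa(t-u)}\,dW^S_u .
\]
The last term already matches the claimed $\sigma$-term, so it remains to substitute the expression for $U_u$ obtained in Step 1 into the Lebesgue integral and split it into three pieces.

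\textbf{Step 3: the deterministic integrals.} The contributions of $\bar U$ and of $(U_0-\bar U)e^{-\eta u}$ reduce to the elementary integrals $\int_0^t e^{-\kappa(t-u)}\,du$ and $\int_0^t e^{-\kappa(t-u)}e^{-\eta u}\,du$; evaluating them (this is where the hypothesis $\kappa\neq\eta$, guaranteed by $\kappa>\eta>0$, is used) and recombining with the $S_0 e^{-\kappa t}$ term produces exactly $\bar U + (S_0-\bar U)e^{-\kappa t} + \frac{\kappa}{\kappa-\eta}(U_0-\bar U)(e^{-\eta t}-e^{-\kappa t})$. \textbf{Step 4: the mixed stochastic term.} The remaining piece is $\kappa\nu\int_0^t e^{-\kappa(t-u)}\big(\int_0^u e^{-\eta(u-s)}\,dW^U_s\big)\,du$, and I would swap the order of integration using the stochastic Fubini theorem to get $\kappa\nu\int_0^t\big(\int_s^t e^{-\kappa(t-u)}e^{-\eta(u-s)}\,du\big)\,dW^U_s$; the inner integral equals $\frac{1}{\kappa-\eta}\big(e^{-\eta(t-s)}-e^{-\kappa(t-s)}\big)$, which yields the claimed $W^U$-term and closes the identification.

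The computations in Steps 3 and 4 are routine, so the only point deserving attention is the invocation of the stochastic Fubini theorem in Step 4; here the integrability hypotheses are trivially satisfied because the kernel $(s,u)\mapsto\mathbf{1}_{\{s\le u\le t\}}\,e^{-\kappa(t-u)}e^{-\eta(u-s)}$ is bounded. As a consistency check, one can differentiate the resulting expression (using the Leibniz rule for the parameter-dependent stochastic integrals) and verify that it solves~\eqref{eq:NOU_system} with the prescribed initial condition; alternatively, Steps 1--2 can be repackaged as the computation of the matrix exponential $e^{tA}$ with $A=\begin{pmatrix}-\kappa & \kappa\\ 0 & -\eta\end{pmatrix}$ for the two-dimensional linear SDE satisfied by $(S_t,U_t)_t$, which leads to the same formula.
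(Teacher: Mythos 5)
Your proposal is correct and follows essentially the same route as the paper: solve the inner OU equation for $U$, apply variation of constants to the $S$ equation to get $S_t = S_0 e^{-\kappa t} + \kappa\int_0^t e^{-\kappa(t-u)}U_u\,du + \sigma\int_0^t e^{-\kappa(t-u)}\,dW^S_u$, substitute the expression for $U$, and exchange the order of integration via (stochastic) Fubini to identify the $W^U$ term. Your additional remarks on the integrability hypotheses for Fubini and the matrix-exponential reformulation are sound but not needed beyond what the paper already does.
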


\begin{proof}
The process \((U_t)_{t}\) follows a classical OU dynamic, given by
\[
U_t = \bar{U} + (U_0 - \bar{U}) e^{-\eta t} + \nu \int_0^t e^{-\eta (t - s)} \, dW^U_s.
\]
Similary, we know that the solution of the stochastic differential equation for \((S_t)_t\) is
\[
S_t = S_0 e^{-\kappa t}  +  \kappa \int_0^t e^{-\kappa(t- s)} U_s \, ds +  \sigma \int_0^t e^{-\kappa (t-s)} \, dW^S_s.
\]

Next, substituting the expression for \(U_s\) and applying Fubini's theorem, we get
\[
\kappa \int_0^t e^{-\kappa(t- s)} U_s \, ds = \bar{U} \left(1- e^{-\kappa t} \right) + \frac{\kappa}{\kappa - \eta} (U_0 - \bar{U}) \left( e^{-\eta t} - e^{- \kappa t} \right) + \frac{\kappa}{\kappa - \eta} \nu \int_0^t \left( e^{-\eta (t - s)} - e^{-\kappa (t-s)} \right) dW^U_s.
\]

Combining these results, we obtain
\begin{align*}
    S_t &= \bar{U} + (S_0 - \bar{U}) e^{-\kappa t} + \frac{\kappa}{\kappa - \eta} (U_0 - \bar{U}) \left( e^{-\eta t} - e^{-\kappa t} \right) \\
    &\quad + \frac{\kappa }{\kappa - \eta}\nu \int_0^t \left( e^{-\eta (t-s)} - e^{-\kappa (t-s)} \right) dW^U_s + \sigma \int_0^t e^{-\kappa (t-s)} dW^S_s.
\end{align*}
\end{proof}

As we do not observe the value of the underlying process \((U_t)_{t}\) at any point in time, it is preferable to consider the unconditional/stationary version of the process \((S_t)_{t}\), which is given in the corollary below:

\begin{cor}
A stationary representation of the process \((S_t)_{t}\) is:
\[
S_t = \bar{U} + \frac{\kappa}{\kappa - \eta} \nu \int_{-\infty}^t \left( e^{-\eta (t-s)} - e^{-\kappa (t-s)} \right) dW^U_s + \sigma \int_{-\infty}^t e^{-\kappa (t-s)} dW^S_s.
\]
In particular, the unconditional mean and covariance function are given by the following expressions:
$$\mathbb{E}[S_t] = \bar U,$$
$$C(t,s; \kappa, \eta, \sigma, \nu) = \text{Cov}(S_t, S_s) = \frac 12 \frac{\kappa^2}{\eta(\kappa^2 - \eta^2)} \nu^2 e^{-\eta|t-s|} + \frac 12 \left( \frac{\sigma^2}\kappa - \frac{\kappa}{(\kappa^2 - \eta^2)} \nu^2\right) e^{-\kappa|t-s|}.$$
\end{cor}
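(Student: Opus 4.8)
The plan is to obtain the stationary representation as the $t_0\to-\infty$ limit of the formula in Proposition~\ref{prop_NOU} and then to read off the mean and the covariance via the Itô isometry for Wiener integrals.

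\textbf{Step 1 (stationary representation).} Instead of starting the system at time $0$, I would start it at an arbitrary time $t_0 < t$. By time-homogeneity of the system~\eqref{eq:NOU_system}, Proposition~\ref{prop_NOU} gives
\begin{align*}
S_t &= \bar U + (S_{t_0} - \bar U)e^{-\kappa(t-t_0)} + \frac{\kappa}{\kappa-\eta}(U_{t_0}-\bar U)\left(e^{-\eta(t-t_0)} - e^{-\kappa(t-t_0)}\right) \\
&\quad + \frac{\kappa}{\kappa-\eta}\nu\int_{t_0}^t\left(e^{-\eta(t-s)} - e^{-\kappa(t-s)}\right)dW^U_s + \sigma\int_{t_0}^t e^{-\kappa(t-s)}\,dW^S_s.
\end{align*}
Since $\kappa,\eta>0$, the three deterministic transient terms vanish as $t_0\to-\infty$ (in the stationary regime $(S_{t_0},U_{t_0})$ is bounded in $L^2$ uniformly in $t_0$). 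Moreover the two integrands are square-integrable on $(-\infty,t]$ — indeed $\int_{-\infty}^t e^{-2\kappa(t-s)}\,ds = \tfrac{1}{2\kappa}<\infty$ and $\int_{-\infty}^t (e^{-\eta(t-s)}-e^{-\kappa(t-s)})^2\,ds<\infty$ — so the Wiener integrals $\int_{-\infty}^t(\cdots)\,dW^U_s$ and $\int_{-\infty}^t e^{-\kappa(t-s)}\,dW^S_s$ are well defined and the $L^2$-limit yields the announced expression; one checks directly that it solves~\eqref{eq:NOU_system} and is stationary, which identifies it as the unconditional version.

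\textbf{Step 2 (mean and covariance).} Both Wiener integrals above have zero expectation, hence $\mathbb{E}[S_t]=\bar U$. For the covariance, fix $t\ge s$ (the general case follows by symmetry of $\mathrm{Cov}$). Writing $S_t-\bar U$ and $S_s-\bar U$ via the representation of Step 1 and using that $(W^S_t)_t$ and $(W^U_t)_t$ are independent, the cross terms drop and
\begin{align*}
\mathrm{Cov}(S_t,S_s) &= \frac{\kappa^2\nu^2}{(\kappa-\eta)^2}\,\mathbb{E}\!\left[\int_{-\infty}^t (e^{-\eta(t-u)} - e^{-\kappa(t-u)})\,dW^U_u\int_{-\infty}^s (e^{-\eta(s-u)} - e^{-\kappa(s-u)})\,dW^U_u\right] \\
&\quad + \sigma^2\,\mathbb{E}\!\left[\int_{-\infty}^t e^{-\kappa(t-u)}\,dW^S_u\int_{-\infty}^s e^{-\kappa(s-u)}\,dW^S_u\right].
\end{align*}
By the extended Itô isometry $\mathbb{E}[\int_{-\infty}^t f\,dW\int_{-\infty}^s g\,dW]=\int_{-\infty}^s fg\,du$ for $t\ge s$, each expectation reduces to elementary integrals of the form $\int_{-\infty}^s e^{\alpha u}\,du = e^{\alpha s}/\alpha$ with $\alpha\in\{2\eta,\eta+\kappa,2\kappa\}$. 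Carrying out these integrals, the $\sigma^2$ term contributes $\tfrac{\sigma^2}{2\kappa}e^{-\kappa(t-s)}$ and the $W^U$ term contributes $\tfrac{\kappa^2\nu^2}{2\eta(\kappa^2-\eta^2)}e^{-\eta(t-s)} - \tfrac{\kappa\nu^2}{2(\kappa^2-\eta^2)}e^{-\kappa(t-s)}$ after simplification with $(\kappa-\eta)(\kappa+\eta)=\kappa^2-\eta^2$; summing and replacing $t-s$ by $|t-s|$ gives the stated formula.

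There is no genuine obstacle here; the only points that require care are the justification of the $t_0\to-\infty$ passage in Step 1 (uniform $L^2$-boundedness in the stationary regime and $L^2$-convergence of the Wiener integrals, together with the verification that the limiting process is indeed stationary and solves~\eqref{eq:NOU_system}) and the somewhat tedious algebraic bookkeeping when collecting the coefficients of $e^{-\eta(t-s)}$ and $e^{-\kappa(t-s)}$ in Step 2.
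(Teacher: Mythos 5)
Your proposal is correct and follows essentially the same route as the paper: the paper obtains the stationary representation directly from Proposition~\ref{prop_NOU} (your $t_0\to-\infty$ limit simply makes that step explicit), and the covariance is computed identically, using independence of the two Brownian motions, the Itô isometry for Wiener integrals on $(-\infty,s]$, and the same elementary exponential integrals, yielding the same coefficients of $e^{-\eta|t-s|}$ and $e^{-\kappa|t-s|}$.
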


\begin{proof}
The expression of the unconditional/stationary version of $S_t$ follows from Proposition \ref{prop_NOU}. We straightforwardly deduce that the unconditional expected value of the process $(S_t)_t$ is $\bar U$.\\

As far as the covariance function \( (t,s) \mapsto \text{Cov}(S_t, S_s) \) is concerned, we have (assuming without loss of generality that~\( t \geq s \)):
\begin{eqnarray*}
    \text{Cov}(S_t, S_s) &=& \text{Cov}\left( \frac{\kappa}{\kappa - \eta} \nu \int_{-\infty}^t \left( e^{-\eta(t-u)} - e^{-\kappa(t-u)} \right) dW^U_u + \sigma \int_{-\infty}^t e^{-\kappa(t-u)} dW^S_u, \right. \\
    &&\qquad\quad \left. \frac{\kappa}{\kappa - \eta} \nu \int_{-\infty}^s \left( e^{-\eta(s-u)} - e^{-\kappa(s-u)} \right) dW^U_u + \sigma \int_{-\infty}^s e^{-\kappa(s-u)} dW^S_u \right).\\
    &=& \text{Cov}\left(\frac{\kappa}{\kappa - \eta} \nu \int_{-\infty}^t \left( e^{-\eta(t-u)} - e^{-\kappa(t-u)} \right) dW^U_u, \frac{\kappa}{\kappa - \eta} \nu \int_{-\infty}^s \left( e^{-\eta(s-u)} - e^{-\kappa(s-u)} \right) dW^U_u\right)\\
    && +\ \text{Cov}\left(\sigma \int_{-\infty}^t e^{-\kappa(t-u)} dW^S_u, \sigma \int_{-\infty}^s e^{-\kappa(s-u)} dW^S_u \right)\\
    &=& \left( \frac{\kappa }{\kappa - \eta}  \right)^2 \nu^2 \int_{-\infty}^{s} \left( e^{-\eta(t-u)} - e^{-\kappa(t-u)} \right) \left( e^{-\eta(s-u)} - e^{-\kappa(s-u)} \right) du\\
    && + \sigma^2 \int_{-\infty}^{s} e^{-\kappa(t-u)} e^{-\kappa(s-u)} du,
\end{eqnarray*}
where we used the independence between the two Brownian motions.\\

A straightfoward computation gives
\begin{eqnarray*}
&&\int_{-\infty}^{s} \left( e^{-\eta(t-u)} - e^{-\kappa(t-u)} \right) \left( e^{-\eta(s-u)} - e^{-\kappa(s-u)} \right) du\\
&=& \frac 1{2\eta} e^{-\eta(t-s)} + \frac 1{2\kappa} e^{-\kappa(t-s)} - \frac 1{\kappa+\eta} e^{-\kappa(t-s)} - \frac 1{\kappa+\eta} e^{-\eta(t-s)}.
\end{eqnarray*}
and $$\int_{-\infty}^{s} e^{-\kappa(t-u)} e^{-\kappa(s-u)} du = \frac1{2\kappa}  e^{-\kappa(t-s)}.$$

Combining the above and rearranging the terms, we get:
\[
\text{Cov}(S_t, S_s) = \frac{1}{2} \frac{\kappa^2}{\eta (\kappa^2 - \eta^2)} \nu^2 e^{-\eta (t-s)} + \frac{1}{2} \left( \frac{\sigma^2}{\kappa} - \frac{\kappa}{\kappa^2 - \eta^2} \nu^2 \right) e^{-\kappa (t-s)}.
\]
Exchanging the role of $t$ and $s$, we eventually get the result.\\
\end{proof}

Because $(S_t)_t$ is a Gaussian process with mean $\bar U$ and known covariance function $C(\cdot, \cdot; \kappa, \eta, \sigma, \nu)$, the parameters $\bar U, \kappa, \eta, \sigma$, and $\nu$ can be estimated in a classical way. Using a sample $\mathcal S = (S_{t_1}, \ldots, S_{t_d})'$ of observations at times $t_1 < \ldots < t_d$ in the form of a column vector, one can for example maximize over $(\bar U, \kappa, \eta, \sigma, \nu) \in \{v = (v_1, \ldots, v_5)  \in \mathbb R^5 | v_2>v_3>0, v_4 >0, v_5 > 0\}$ the log-likelihood of the sample which writes
\begin{eqnarray*}
\mathcal L\mathcal L(S_{t_1}, \ldots, S_{t_d} ; \bar U, \kappa, \eta, \sigma, \nu) &=& - \frac d2 \log(2\pi) - \frac 12 \log\left(\textrm{det}\left(C(t_i, t_j; \kappa, \eta, \sigma, \nu)\right)_{1\le i,j\le d}\right)\\
&& - \frac 12 (\mathcal S - \bar U {1})'\left(C(t_i, t_j; \kappa, \eta, \sigma, \nu)\right)_{1\le i,j\le d}^{-1} (\mathcal S - \bar U {1}),
\end{eqnarray*}
where ${1}$ is a $d$-dimensional column vector with all coordinates equal to $1$.\\

\begin{rem}
For a subdivision $(t_1, \ldots, t_d)$ with a fixed time step between consecutive observations, the covariance matrix $\left(C(t_i, t_j; \kappa, \eta, \sigma, \nu)\right)_{1\le i,j\le d}$ is a Toeplitz matrix.
\end{rem} 

It is worth noting that, in order to reduce the dimensionality of the problem from $5$ to $4$ and optimize over a rectangular set, we can use the fact that the MLE estimator of $\bar{U}$ is $\bar{\mathcal{S}} = \frac{1}{d} \sum_{1 \leq i \leq d} S_{t_i}$. The remaining parameters can then be estimated by numerically approximating the minimizer of the following function:
\begin{eqnarray*}
(\delta, \eta, \sigma, \nu) \in \left(\mathbb{R}_+^*\right)^4 &\mapsto & \log\left(\textrm{det}\left(C(t_i, t_j; \eta + \delta, \eta, \sigma, \nu)\right)_{1 \leq i,j \leq d}\right) \\
&&+ (\mathcal{S} - \bar{\mathcal{S}} \mathbf{1})'\left(C(t_i, t_j; \eta + \delta, \eta, \sigma, \nu)\right)_{1 \leq i,j \leq d}^{-1} (\mathcal{S} - \bar{\mathcal{S}} \mathbf{1}).
\end{eqnarray*}
and use the change of variables $\kappa = \eta + \delta$.

\subsection{Filtering Procedure}
\label{sec:filter-procedure}

Beyond estimating the parameters driving the dynamics of $(S_t)_t$, a significant challenge with multi-level nested OU processes is that the current mean-reversion target is not directly observable. As a result, it is ineffective to consider an optimal control framework based on the pair of state variables $(S,U)$ because optimal controls cannot depend on $U$, which is unobservable.\\

The classical results of stochastic filtering, recalled in the appendix, provide an alternative to the unobservable process $(U_t)_t$ by considering the process $(\widehat{U}_t)_t = \left(\mathbb{E}[ U_t | \mathcal{F}_t^S ]\right)_t$, where $\mathbb F^S = \left(\mathcal{F}_t^S\right)_t$ represents the natural filtration associated with $(S_t)_t$. More precisely, we have the following result:

\begin{prop}
Let $(S_t, U_t)_t$ be a solution of the stochastic differential equation \eqref{eq:NOU_system}, and let $\widehat{U}_t = \mathbb{E}[ U_t | \mathcal{F}_t^S ]$.\\
Then, the pair $(S_t, \widehat{U}_t)_t$ satisfies the system
\begin{equation*}
\begin{cases}
    dS_t = -\kappa \left( S_t - \widehat{U}_t \right) \, dt + \sigma \, d\widehat{W}^S_t, \\
    d\widehat{U}_t = -\eta \left( \widehat{U}_t - \bar{U} \right) \, dt + \frac{\kappa}{\sigma} V_t \, d\widehat{W}^S_t,
\end{cases}
\end{equation*}
where $\widehat{W}^S_t = W^S_t + \frac{\kappa}{\sigma} \int_0^t (U_s - \widehat{U}_s) ds$ defines a $\left(\mathcal{F}_t^S\right)_t$-Brownian motion, and $V_t = \mathbb{V}(U_t | \mathcal{F}_t^S)$ solves the ordinary differential equation
$$
\frac{dV_t}{dt} = -2\eta V_t + \nu^2 - \frac{\kappa^2}{\sigma^2}V_t^2.
$$
\end{prop}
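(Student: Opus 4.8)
The plan is to recognize \eqref{eq:NOU_system} as a linear--Gaussian filtering system and to apply the Kalman--Bucy filtering theorem recalled in the appendix. Writing the observation equation as $dS_t = (-\kappa S_t + \kappa U_t)\,dt + \sigma\,dW^S_t$, the drift is affine in the (unobserved) signal $U_t$ and in the observation $S_t$, the observation noise $\sigma\,dW^S_t$ is non-degenerate, the signal $(U_t)_t$ has affine dynamics $dU_t = (-\eta U_t + \eta\bar U)\,dt + \nu\,dW^U_t$, and the driving Brownian motions $W^S$ and $W^U$ are independent. These are exactly the hypotheses under which the conditional law of $U_t$ given $\mathcal F^S_t$ is Gaussian, with a mean $\widehat U_t$ and a \emph{deterministic} variance $V_t = \mathbb V(U_t\mid\mathcal F^S_t)$ obeying closed-form equations.

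First I would introduce the innovation process $\widehat W^S_t := \frac1\sigma\big(S_t - S_0 - \int_0^t \mathbb E[-\kappa(S_r - U_r)\mid\mathcal F^S_r]\,dr\big) = \frac1\sigma\big(S_t - S_0 + \kappa\int_0^t (S_r - \widehat U_r)\,dr\big)$, which by construction satisfies $\sigma\,d\widehat W^S_t = dS_t + \kappa(S_t - \widehat U_t)\,dt$; substituting $dS_t$ from \eqref{eq:NOU_system} gives $\sigma\,d\widehat W^S_t = \kappa(U_t - \widehat U_t)\,dt + \sigma\,dW^S_t$, i.e. the expression for $\widehat W^S$ stated in the proposition. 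That $(\widehat W^S_t)_t$ is an $\mathbb F^S$-Brownian motion follows from Lévy's characterization: it is by construction continuous and $\mathbb F^S$-adapted, it is an $\mathbb F^S$-martingale since $\mathbb E[\,\cdot\mid\mathcal F^S_t]$ of the drift of $S$ has been subtracted (the standard innovations lemma), and its quadratic variation equals that of $\frac1\sigma S$, namely $t$. Rewriting the $S$-equation through $\sigma\,dW^S_t = \sigma\,d\widehat W^S_t - \kappa(U_t - \widehat U_t)\,dt$ then immediately yields $dS_t = -\kappa(S_t - \widehat U_t)\,dt + \sigma\,d\widehat W^S_t$.

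Next I would derive the equation for $\widehat U_t$. Either one quotes the Kalman--Bucy gain formula directly, or one uses the Fujisaki--Kallianpur--Kunita filtering equation: the conditional mean of the signal drift is $-\eta(\widehat U_t - \bar U)$, and the coefficient multiplying $d\widehat W^S_t$ is $\frac1\sigma$ times the conditional covariance between $U_t$ and the observation-drift coefficient, i.e. $\frac1\sigma\,\mathrm{Cov}(U_t,\kappa U_t\mid\mathcal F^S_t) = \frac{\kappa}{\sigma}V_t$ (there is no common-noise contribution since $W^S\perp W^U$), giving $d\widehat U_t = -\eta(\widehat U_t - \bar U)\,dt + \frac{\kappa}{\sigma}V_t\,d\widehat W^S_t$. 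Finally, for the Riccati equation I would work in the original filtration, where $d\widehat W^S_t = dW^S_t + \frac{\kappa}{\sigma}(U_t - \widehat U_t)\,dt$, so the error $e_t := U_t - \widehat U_t$ satisfies
\begin{equation*}
de_t = -\Big(\eta + \tfrac{\kappa^2}{\sigma^2}V_t\Big)e_t\,dt + \nu\,dW^U_t - \tfrac{\kappa}{\sigma}V_t\,dW^S_t .
\end{equation*}
Applying Itô's formula to $e_t^2$, using $W^S\perp W^U$ so that the quadratic-variation term is $\big(\nu^2 + \frac{\kappa^2}{\sigma^2}V_t^2\big)\,dt$, taking expectations, and using $\mathbb E[e_t^2] = V_t$ (as $V_t$ is deterministic) gives $\dot V_t = -2\big(\eta + \frac{\kappa^2}{\sigma^2}V_t\big)V_t + \nu^2 + \frac{\kappa^2}{\sigma^2}V_t^2 = -2\eta V_t + \nu^2 - \frac{\kappa^2}{\sigma^2}V_t^2$, as claimed.

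The main obstacle, and the reason the last computation must be done carefully, is the combination of feedback of the observation into its own drift with the change of filtration: in $\mathbb F^S$ the process $\widehat W^S$ is a Brownian motion, but in the original filtration it carries the extra drift $\frac{\kappa}{\sigma}(U_t - \widehat U_t)\,dt$, and it is precisely the interplay between the $+\frac{\kappa^2}{\sigma^2}V_t^2$ coming from the martingale part of $e_t$ and the $-2\frac{\kappa^2}{\sigma^2}V_t^2$ coming from this extra drift that produces the correct (stabilizing) sign $-\frac{\kappa^2}{\sigma^2}V_t^2$ in the Riccati equation. A secondary point to check is that the conditional law of $U_t$ given $\mathcal F^S_t$ is genuinely Gaussian with deterministic variance, which is what legitimizes replacing $\mathbb E[e_t^2\mid\mathcal F^S_t]$ by the deterministic $V_t$; this is built into the statement of the linear filtering theorem in the appendix, but would otherwise require a separate (standard) argument via the joint Gaussianity of finite-dimensional functionals of $(S,U)$.
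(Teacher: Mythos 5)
Your proposal is correct, but it follows a partly different route from the paper. The paper does not prove the proposition in the main text at all: it derives it as a specialization of the general linear filtering results proved self-containedly in the appendix (with $Z=S$, $\zeta=U$, $\Gamma=(-\kappa,\kappa)$, $\Sigma_Z^{1/2}=\sigma$, $\Theta=-\eta$, $\upsilon=\eta\bar U$, $\Sigma_\zeta^{1/2}=\nu$, $\tilde\rho=0$). There, the innovation process is handled exactly as you do (projection lemma plus Lévy's characterization), but the filter equation is obtained via the martingale representation theorem and a bracket-matching argument comparing the finite-variation parts of $d(Z^\intercal M\zeta)$ and $d(Z^\intercal M\widehat\zeta)$, and the Riccati equation is obtained by comparing the drifts of $\mathbb{E}[\zeta^i\zeta^j\,|\,\mathcal F^Z_t]$ and $\widehat\zeta^i\widehat\zeta^j$, identifying the martingale integrand through a third-moment (Wick) computation in the Gaussian space; this in particular \emph{proves} that $V_t$ is deterministic rather than assuming it. You instead quote the Kalman--Bucy/FKK gain formula for the $\widehat U$ equation and then derive the Riccati equation in the original filtration via the error process $e_t=U_t-\widehat U_t$, Itô on $e_t^2$, and unconditional expectations --- a cleaner and more elementary computation (your sign bookkeeping between the $+\frac{\kappa^2}{\sigma^2}V_t^2$ quadratic-variation term and the $-2\frac{\kappa^2}{\sigma^2}V_t^2$ feedback drift is right), but it presupposes that $\mathbb{V}(U_t\,|\,\mathcal F^S_t)$ is deterministic so that $\mathbb{E}[e_t^2]=V_t$; you correctly flag that this rests on the joint Gaussianity of $(S,U)$, which is exactly what the paper's appendix argument (Gaussian space plus Wick's theorem) supplies. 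In short: your approach buys brevity by leaning on the standard linear filtering theorem, while the paper's buys self-containedness and a direct proof that the conditional variance is deterministic.
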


In practice, an initial value for $\widehat{U}$ must be selected. One possible choice is $\bar{U}$. For the variance process $(V_t)_t$, an initial value can be chosen, or the asymptotic value
$$
V_{\infty} = \frac{\sigma^2}{\kappa^2} \left( -\eta + \sqrt{\eta^2 + \left(\frac{\kappa \nu}{\sigma}\right)^2}\right) = \frac{\nu^2}{\eta + \sqrt{\eta^2 + \left(\frac{\kappa \nu}{\sigma}\right)^2}}
$$
can be used, leading to the stochastic differential equation
\begin{equation}
\begin{cases}
    dS_t = -\kappa \left( S_t - \widehat{U}_t \right) \, dt + \sigma \, d\widehat{W}^S_t, \\
    d\widehat{U}_t = -\eta \left( \widehat{U}_t - \bar{U} \right) \, dt + \widehat{\nu} \, d\widehat{W}^S_t,
\end{cases}
\qquad \text{where} \quad \widehat{\nu} = \nu \frac{\frac{\kappa \nu}{\sigma}}{\eta + \sqrt{\eta^2 + \left(\frac{\kappa \nu}{\sigma}\right)^2}}.
\label{eq:NOU_system_filtered}
\end{equation}

Unlike with the pair $(S,U)$, it is now feasible to consider an optimal control framework based on the state variables $(S,\widehat{U})$.

\subsection{Extension to Liquid Staking Tokens}
\label{section:ext-lst}

Before introducing the optimal control framework that will allow use to  define an automated market-making strategy for swapping one stablecoin for another, let us highlight that the mathematical framework we proposed for modeling stablecoin exchange rate dynamics can easily be adapted -- up to a discounting factor -- to pairs of cryptocurrencies consisting of a non-rebasing liquid staking token and its underlying native token. If $(\tilde{S}_t)_t$ represents the price of the non-rebasing liquid staking token in terms of the underlying native token, it is natural to assume that the rebased price process $(S_t)_t = \left(\tilde{S}_t \exp\left(-\int_0^t r_s ds\right)\right)_t$ follows the same dynamics as that of a stablecoin, where $(r_t)_t$ denotes the yield associated with staking.\\

In practice, one could rebase the price using a time series of yields $(r_t)_t$, but this raises the issue of forecasting that yield for control purposes. Alternatively, we can assume that the yield process remains approximately constant over short time scales and estimate a constant yield $r$ from the recent realized drift of log prices. Section \ref{sec:empirical} will present some real data of liquid staking tokens which clearly exhibits stablecoin-like dynamics once discounted.\\

As a result of the above, the algorithms developed in this paper can be applied to both types of crypto pairs: directly for stablecoin-stablecoin pairs, and, with the addition of a discounting factor, for pairs involving a non-rebasing liquid staking token and its underlying native token.\\

\section{Designing a Price-Aware Automated Market Maker for Pegged Assets}
\label{sec:optimal-model}

\subsection{The Optimal Control Problem}
\label{sec:optimal-control}

We consider a liquidity pool composed of two pegged cryptos (e.g., a USDC/USDT pool), referred to as crypto \( 0 \) and crypto \( 1 \), respectively. An exogenous market exchange rate, representing the price of crypto \( 1 \) in terms of crypto \( 0 \), is assumed to be observable and is modeled as a stochastic process \( (S_t)_t \). This exchange rate could, for instance, correspond to the mid-price on a centralized exchange with a limit order book, such as Binance, Kraken, or Coinbase, or it could be provided by an oracle such as Pyth. Importantly, this rate is indicative and not directly tradable, even for infinitesimally small transactions. We model the process \( (S_t)_t \) as following the dynamics of a multi-level nested Ornstein-Uhlenbeck (OU) process, defined on a probability space \( (\Omega, \mathbb{F}, \mathbb{P}) \), with parameters specified in Eq. \eqref{eq:NOU_system}.\\

For a given time horizon \( T > 0 \), we assume the automated market maker (AMM) provides liquidity to traders for swapping the two cryptos. Let \( S^{1,0}(t,z) = S_t - \delta^{1,0}(t,z) \) denote the exchange rate at which the AMM agrees to buy a quantity \( z \) of crypto \( 1 \) in exchange for crypto \( 0 \) at time \( t \in [0,T] \), and let \( S^{0,1}(t,z) = S_t + \delta^{0,1}(t,z) \) denote the exchange rate at which the AMM agrees to sell a quantity \( z \) of crypto \( 1 \) in exchange for crypto \( 0 \) at time \( t \in [0,T] \).\\

To simplify the analysis, we follow the approach used in \cite{bergault2024automated}, originally introduced in \cite{barzykin2023dealing}, and assume that the markups accumulate separately from the reserves of cryptos \( 0 \) and \( 1 \) held in the pool. We model the accumulated markups using a process \( (X_t)_{t} \), whose dynamics are given by
\begin{equation}
dX_t = \int_{z \in \mathbb{R}_+^*} z\delta^{0,1}(t,z) J^{0,1}(dt,dz) + \int_{z \in \mathbb{R}_+^*} z\delta^{1,0}(t,z) J^{1,0}(dt,dz),
\label{Xdyn}
\end{equation}
where \( J^{0,1}(dt,dz) \) and \( J^{1,0}(dt,dz) \) are two \( \mathbb{R}_+^* \)-marked point processes representing transactions. Here, \( J^{0,1}(dt,dz) \) models instances where the AMM sells crypto \( 1 \) and receives crypto \( 0 \), while \( J^{1,0}(dt,dz) \) represents the AMM selling crypto \( 0 \) and receiving crypto \( 1 \).\\

These marked point processes also allow to write the dynamics of the pool reserves in crypto $0$ (represented by the process $(q^0_t)_t$) and crypto $1$ (represented by the process $(q^1_t)_t$):
$$dq^0_t =   -S_t \left( \int_{z \in \mathbb R_+^*} z \left(J^{1,0}(dt,dz) -  J^{0,1}(dt,dz) \right)\right) \quad \text{and} \quad dq^1_t =  \int_{z \in \mathbb R_+^*} z \left(J^{1,0}(dt,dz) -  J^{0,1}(dt,dz) \right).$$

As in~\cite{bergault2024automated}, we assume that \( J^{0,1}(dt,dz) \) and \( J^{1,0}(dt,dz) \) have known intensity kernels, denoted by \( (\nu^{0,1}_t(dz))_t \) and \( (\nu^{1,0}_t(dz))_t \), which satisfy
$$
\nu^{0,1}_t(dz) = \Lambda^{0,1}\left(z, \delta^{0,1}(t,z)\right)m(dz) \quad \text{and} \quad \nu^{1,0}_t(dz) = \Lambda^{1,0}\left(z, \delta^{1,0}(t,z)\right)m(dz),
$$
where \( m \) is a measure (typically Lebesgue or discrete), and \( \Lambda^{0,1} \) and \( \Lambda^{1,0} \) are intensity functions defined as
$$
\Lambda^{0,1}(z,\delta) = \lambda^{0,1}(z) \frac{1}{1 + e^{a^{0,1}(z) + b^{0,1}(z) \delta}} \quad \text{and} \quad \Lambda^{1,0}(z,\delta) = \lambda^{1,0}(z) \frac{1}{1 + e^{a^{1,0}(z) + b^{1,0}(z) \delta}}.
$$
Informally, \( \lambda^{0,1}(z) m(dz) \) and \( \lambda^{1,0}(z) m(dz) \) represent the maximum number (per unit of time) of transactions of size within the infinitesimal interval \( [z, z+dz] \), or equivalently, the height of the demand curve. The parameters \( a^{0,1}(z) \), \( b^{0,1}(z) \), \( a^{1,0}(z) \), and \( b^{1,0}(z) \) describe the sensitivity of liquidity takers to markups, thereby shaping the demand curve.\\

In what follows, we focus on the PnL of an agent who deposits tokens into the pool's reserves (referred to as a liquidity provider) in comparison to an agent who holds the tokens outside the AMM.\footnote{This is commonly referred to as the Hodl benchmark in the AMM literature.} We therefore introduce the following two processes:
\begin{equation}
\left(Y^0_t \right)_{t} = \left(q^0_t - q^0_0 \right)_{t \in \mathbb{R}_+} \quad \text{and} \quad \left(Y^1_t \right)_{t} = \left(q^1_t - q^1_0 \right)_{t}.
\label{Ydyn}
\end{equation}

The mark-to-market excess PnL at time $T$ is given by
\begin{eqnarray*}
&&X_T + Y^0_T + Y^1_T S_T \\
&=& \int_0^T \left\{ \int_{z \in \mathbb{R}_+^*} z \delta^{0,1}(t,z) J^{0,1}(dt,dz) + \int_{z \in \mathbb{R}_+^*} z \delta^{1,0}(t,z) J^{1,0}(dt,dz) - Y^1_t \kappa (S_t - U_t) dt + \sigma Y^1_t dW^S_t \right\}\\
&=& \int_0^T \left\{ \int_{z \in \mathbb{R}_+^*} z \delta^{0,1}(t,z) J^{0,1}(dt,dz) + \int_{z \in \mathbb{R}_+^*} z \delta^{1,0}(t,z) J^{1,0}(dt,dz) - Y^1_t \kappa (S_t - \widehat U_t) dt + \sigma Y^1_t d\widehat W^S_t \right\}.
\end{eqnarray*}

The approach we propose for selecting the markups is based on several principles. First, the markups must depend solely on observable quantities. In other words, the markups at time \( t \) can only rely on information available prior to \( t \), with dependence on price dynamics occurring through \( S \) and \( \widehat{U} \), not through \( U \), which is not observable. Second, the markups should enable the liquidity provider to earn profits from their liquidity provision and by taking advantage of the mean-reverting nature of the exchange rate while also limiting risk exposure relative to the Hodl benchmark. Given the exchange rate dynamics, maximizing the expected excess PnL while penalizing deviations from the Hodl benchmark, as in \cite{bergault2024automated}, is insufficient. This is because the risk of deviating from Hodl depends on the price levels, because of mean reversion. Instead, we revert to the original objective function proposed in \cite{avellaneda2008high}, aiming to maximize the expected exponential (CARA) utility of the excess PnL at time \( T \):
\begin{align*}
\underset{\delta \in \mathcal{A}}{\sup} \mathbb{E} \left[-e^{-\gamma \left(X_T + Y^0_T + Y^1_T S_T \right)} \right]
\end{align*}
where \( \gamma \) is a risk aversion parameter, offering flexibility in strategy selection, and
\begin{equation}
\begin{split}
\mathcal{A}:= \left\lbrace \delta = \left(\delta^{0,1}, \delta^{1,0}\right) : \Omega \times [0,T] \times \mathbb{R}_+^* \mapsto \mathbb{R}^2 \mid \delta \text{ is } \mathcal{P}^{X,Y^0,Y^1,S} \otimes \mathcal{B}(\mathbb{R}_+^*) \text{-measurable} \right\rbrace,
\end{split}
\end{equation}
where \( \mathcal{P}^{X,Y^0,Y^1,S} \) denotes the \( \sigma \)-algebra of \( \mathbb{F}^{X,Y^0,Y^1,S} \)-predictable subsets of \( \Omega \times [0,T] \), and \( \mathcal{B}(\mathbb{R}_+^*) \) represents the Borel sets of \( \mathbb{R}_+^* \).\\

\begin{rem}
Unlike in \cite{bergault2024automated}, we do not account for the risk of reserve depletion on either side of the pool. Indeed, when the pool is sufficiently large or the risk aversion is not too low, this risk can be safely ignored.
\end{rem}

\subsection{From Hamilton-Jacobi-Bellman to Riccati: the quadratic Hamiltonian approximation}

The state variables associated with the optimal control problem correspond to the stochastic processes $(X_t)_t$, $(Y^0_t)_t$, $(Y^1_t)_t$, $(S_t)_t$, and $(\widehat{U}_t)_t$, whose dynamics are governed by the stochastic differential equations \eqref{eq:NOU_system_filtered}, \eqref{Xdyn}, and \eqref{Ydyn}.\\

The Hamilton-Jacobi-Bellman equation associated with this optimal control problem is therefore
\begin{eqnarray}
0 &=& \partial_t u(t,x,y^0, y^1, S,\widehat U) - \kappa  \left(S  - \widehat U \right) \partial_{S}u(t,x,y^0, y^1, S,\widehat U) - \eta \left(\widehat U-\bar U \right) \partial_{\widehat U} u(t,x,y^0, y^1, S,\widehat U)\nonumber\\
&& + \frac 12 \sigma^2 \partial_{SS}^2 u(t,x,y^0, y^1, S,\widehat U)+ \frac 12 \widehat\nu^2 \partial_{\widehat U\widehat U}^2 u(t,x,y^0, y^1, S,\widehat U) + \sigma \widehat\nu \partial_{S\widehat U}^2 u(t,x,y^0, y^1, S,\widehat U) \nonumber\\
&& + \int_{z \in \mathbb R_+^*} \underset{\delta}{\sup} \Lambda^{0,1}(z,\delta) \left( u(t,x+z\delta, y^0+zS, y^1-z, S, \widehat U) - u(t,x,y^0, y^1, S,\widehat U) \right)  m^{0,1}(dz) \nonumber \\
&& + \int_{z \in \mathbb R_+^*} \underset{\delta}{\sup} \Lambda^{1,0}(z,\delta) \left( u(t,x+z\delta, y^0-zS, y^1+z, S, \widehat U) - u(t,x,y^0, y^1, S,\widehat U) \right)  m^{1,0}(dz), \label{HJBu}
\end{eqnarray}
with terminal condition $u(T,x,y^0, y^1, S,\widehat U) = -e^{-\gamma \left(x+y^0 + y^1S\right)}$.\\

Using an ansatz inspired by that of \cite{gueant2017optimal}, here
$$u(t,x,y^0, y^1, S,\widehat U) = -e^{-\gamma \left( x+y^0 + y^1S+ \theta(t,y^1,S,\widehat U) \right)},$$ we see that Eq. \eqref{HJBu} can be simplified into
\begin{eqnarray}
0 &=& \partial_t \theta(t,y, S, \widehat U) - \kappa  \left(S  - \widehat U \right)  \left(y + \partial_S \theta(t,y, S, \widehat U)\right) -\eta \left(\widehat U-\bar U \right) \partial_{\widehat U} \theta(t,y, S, \widehat U)\nonumber\\
&& + \frac 12 \sigma^2 \partial_{SS}^2 \theta(t,y, S, \widehat U)+ \frac 12 \nu^2 \partial_{\widehat U\widehat U}^2 \theta(t,y, S, \widehat U) + \sigma \nu \partial_{\widehat US}^2 \theta(t,y, S, \widehat U) \nonumber\\
&&- \frac \gamma 2 \left(\sigma \left(y+ \partial_S \theta(t,y, S, \widehat U) \right) + \widehat\nu \partial_{\widehat U} \theta(t,y, S, \widehat U) \right)^2 \nonumber \\
&&  + \int_{z \in \mathbb R_+^*} zH^{0,1} \left(z,\frac{\theta(t,y, S, \widehat U) -  \theta(t,y-z, S, \widehat U) }{z}\right)m^{0,1}(dz) \nonumber\\
&&  + \int_{z \in \mathbb R_+^*} zH^{1,0}\left(z,\frac{\theta(t,y,S,\widehat U) -  \theta(t,y+z,S,\widehat U) }{z}\right)m^{1,0}(dz), \label{HJB}
\end{eqnarray}
with terminal condition $\theta(T,y,S,\widehat U) = 0$, where
$$
H^{0,1}:(z,p)\in\mathbb R_+^* \times \mathbb{R} \mapsto \underset{\delta}{\sup}\ \frac{\Lambda^{0,1}(z,\delta)}{\gamma z}(1 - e^{-\gamma z(\delta-p)})$$
and 
$$H^{1,0}:(z,p)\in\mathbb R_+^* \times \mathbb{R} \mapsto \underset{\delta}{\sup}\ \frac{\Lambda^{1,0}(z,\delta)}{\gamma z}(1 - e^{-\gamma z(\delta-p)}).$$

Under classical assumptions on the intensities (see \cite{gueant2017optimal}), we can prove, using techniques that have been repeatedly employed in the market-making literature, that given a smooth solution to Eq. \eqref{HJB}, the optimal markups are given by
\begin{equation}
\begin{aligned}\delta^{0,1*}(t,z) &= \bar\delta^{0,1} \left( z, \frac{\theta(t,Y^1_{t-},S_t,\widehat U_t) -  \theta(t,Y^1_{t-}-z,S_t,\widehat U_t) }{z} \right),\\
\delta^{1,0*}(t,z) &= \bar \delta^{1,0} \left(z, \frac{\theta(t,Y^1_{t-},S_t,\widehat U_t) -  \theta(t,Y^1_{t-}+z,S_t,\widehat U_t) }{z} \right),
\end{aligned}
\label{opt_quotes}
\end{equation}
with 
$$\bar \delta^{i,j}(z,p) = (\Lambda^{i,j})^{-1} \left(z, \gamma z H^{i,j}(z,p)-\partial_p{H^{i,j}} (z,p)  \right)$$
where for all $z$, $(\Lambda^{i,j})^{-1}(z, .)$ denotes the inverse of the function $\Lambda^{i,j}(z,.)$.\\

Solving Eq. \eqref{HJB} numerically on a grid presents several challenges. First, the equation is four-dimensional (time plus three state variables). Second, the presence of non-local terms makes traditional methods computationally intensive. Third, since the same Brownian motion drives both $S$ and $\widehat{U}$, nontrivial geometric considerations arise. However, it turns out that our choice of a multi-level nested OU process for $(S_t)_t$ makes it possible to apply the quadratic Hamiltonian trick introduced in \cite{evangelista2020closed} to approximate the solution~$\theta$ to Eq.~\eqref{HJB}.\\

Following~\cite{evangelista2020closed}, we can approximate the functions $H^{0,1}$ and $H^{1,0}$ with quadratic functions as follows:
$$
\check{H}^{0,1}(z,p) = \alpha^{0,1}_0(z) + \alpha^{0,1}_1(z) p + \frac{1}{2} \alpha^{0,1}_2(z) p^2, \qquad \check{H}^{1,0}(z,p) = \alpha^{1,0}_0(z) + \alpha^{1,0}_1(z) p + \frac{1}{2} \alpha^{1,0}_2(z) p^2.
$$

This leads to a new equation:
\begin{eqnarray}
0 &=& \partial_t \check{\theta}(t,y, S, \widehat{U}) - \kappa \left(S - \widehat{U} \right) \left(y + \partial_S \check{\theta}(t,y, S, \widehat{U})\right) - \eta \left(\widehat{U} - \bar{U} \right) \partial_{\widehat{U}} \check{\theta}(t,y, S, \widehat{U}) \nonumber\\
&& + \frac{1}{2} \sigma^2 \partial_{SS}^2 \check{\theta}(t,y, S, \widehat{U}) + \frac{1}{2} \nu^2 \partial_{\widehat{U}\widehat{U}}^2 \check{\theta}(t,y, S, \widehat{U}) + \sigma \nu \partial_{\widehat{U}S}^2 \check{\theta}(t,y, S, \widehat{U}) \nonumber\\
&& - \frac{\gamma}{2} \left(\sigma \left(y + \partial_S \check{\theta}(t,y, S, \widehat{U})\right) + \widehat{\nu} \partial_{\widehat{U}} \check{\theta}(t,y, S, \widehat{U}) \right)^2 \nonumber\\
&& + \int_{z \in \mathbb{R}_+^*} z \check{H}^{0,1} \left(z, \frac{\check{\theta}(t,y, S, \widehat{U}) - \check{\theta}(t,y-z, S, \widehat{U})}{z}\right) m(dz) \nonumber\\
&& + \int_{z \in \mathbb{R}_+^*} z \check{H}^{1,0} \left(z, \frac{\check{\theta}(t,y, S, \widehat{U}) - \check{\theta}(t,y+z, S, \widehat{U})}{z}\right) m(dz), \label{HJBquad}
\end{eqnarray}
with the terminal condition $\check{\theta}(T,y,S,\widehat{U}) = 0$. The solution to this equation will serve as our approximation of $\theta$.\\

The interest of the above approximation is that the solution to Eq. \eqref{HJBquad} is a polynomial of degree $2$ in $(y, S, \widehat U)$ with time-dependent coefficients. Let us make indeed the following ansatz:
$$\check \theta(t,y,S,\widehat U) = - \begin{pmatrix}
   y\\ S\\ \widehat U
\end{pmatrix}^\intercal A(t) \begin{pmatrix}
   y\\ S\\ \widehat U 
\end{pmatrix} - \begin{pmatrix}
   y\\ S\\ \widehat U
\end{pmatrix}^\intercal B(t) - C(t),$$
where $A:[0,T] \mapsto \mathcal M_3(\mathbb R)$, $B:[0,T] \mapsto \mathbb R^3$, and $C:[0,T] \mapsto \mathbb R$ are differentiable functions such that $A(T) = B(T) = C(T) = 0.$ Plugging this expression into Eq. \eqref{HJBquad} yields a system of ODEs for $A$, $B$, and $C$ that can easily be solved numerically:\footnote{We only report here the equations for $A$ and $B$, as $C$ is irrelevant for the computation of the optimal strategy.}

\begin{align*}
\begin{cases}
    A'(t) &= A(t)M^A A(t) + A(t)U^A + {U^A}^\intercal A(t) +R^A\\
    B'(t) &=  A(t) M^A B(t) + A(t) V^B + 2\Delta_{2,2,-1}{e_y}^\intercal A(t)e_y A(t)e_y + {U^A}^\intercal B(t)
\end{cases}
\end{align*}
where the matrix terms for the first equation are $M^A = 2\begin{pmatrix}
     \Delta_{2,1,1}& 0 & 0 \\
    0 & -\gamma\sigma^2 & -\gamma\sigma\widehat \nu \\
    0 & -\gamma\sigma\widehat \nu & -\gamma\widehat \nu^2  \\
\end{pmatrix}$, $U^A = \begin{pmatrix}
    0 & 0 & 0 \\
    \gamma \sigma^2 & \kappa & -\kappa \\
    \gamma \sigma \widehat \nu  & 0 & \eta \\
\end{pmatrix}$, and $R^A = \frac 12 \begin{pmatrix}
     -\gamma \sigma & -\kappa & \kappa \\
     -\kappa & 0 & 0 \\
     \kappa & 0 & 0 \\
\end{pmatrix}$ and the vector term for the second equation is $V^B = 
2 \begin{pmatrix}
\Delta_{1,1,-1}\\
0\\
-\eta \bar U  
\end{pmatrix},
$
with $e_y$ the first vector of the canonical basis $\begin{pmatrix}
1\\
0\\
0  
\end{pmatrix}$ and $\Delta_{i,j,\epsilon} = \int_0^{+\infty} \alpha^{1,0}_i(z)z^jm^{1,0}(dz)+\epsilon \int_0^{+\infty} \alpha^{0,1}_i(z)z^jm^{0,1}(dz).$\\

Computing numerically $A$ and $B$ raises no difficulty. Using $\check \theta$ instead of $\theta$ in Eq. \eqref{opt_quotes} corresponds to choosing the greedy policy with respect to $\check \theta$ in the reinforcement learning terminology and leads to the following markups:
\begin{equation}
\begin{aligned}\check\delta^{0,1*}(t,z) &= \bar\delta^{0,1} \left( z, \frac{\check \theta(t,Y^1_{t-},S_t,\widehat U_t) -  \check \theta(t,Y^1_{t-}-z,S_t,\widehat U_t) }{z} \right)\\
& = \bar\delta^{0,1} \left( z, \frac{ -2\begin{pmatrix}
   Y^1_{t-}\\ S_t\\ \widehat U_t
\end{pmatrix}^\intercal A(t) \begin{pmatrix}
   z\\ S_t\\ \widehat U_t 
\end{pmatrix} +  \begin{pmatrix}
   z\\ S_t\\ \widehat U_t
\end{pmatrix}^\intercal A(t) \begin{pmatrix}
   z\\ S_t\\ \widehat U_t 
\end{pmatrix} - z {e_y}^\intercal B(t) }{z} \right) ,\\
\check \delta^{1,0*}(t,z) &= \bar \delta^{1,0} \left(z, \frac{\check \theta(t,Y^1_{t-},S_t,\widehat U_t) -  \check \theta(t,Y^1_{t-}+z,S_t,\widehat U_t) }{z} \right),\\
& = \bar\delta^{1,0} \left( z, \frac{ 2\begin{pmatrix}
   Y^1_{t-}\\ S_t\\ \widehat U_t
\end{pmatrix}^\intercal A(t) \begin{pmatrix}
   z\\ S_t\\ \widehat U_t 
\end{pmatrix} +  \begin{pmatrix}
   z\\ S_t\\ \widehat U_t
\end{pmatrix}^\intercal A(t) \begin{pmatrix}
   z\\ S_t\\ \widehat U_t 
\end{pmatrix} +  z {e_y}^\intercal B(t) }{z} \right).
\end{aligned}
\label{markups}
\end{equation}

\begin{rem}
If the time horizon \( T \) is large, or if one explicitly wishes to consider the ergodic (limit) version of the problem -- i.e., the limit as \( T \to +\infty \), where \( t \) becomes irrelevant -- one should use the asymptotic markups and replace \( A(t) \) and \( B(t) \) with \( A(0) \) and \( B(0) \) in Eq. \eqref{markups}.
\end{rem}

\section{Empirical Analysis and Numerical Examples}
\label{sec:empirical}

In this section, we illustrate our approach using two pairs of pegged cryptos: the pair of stablecoins USDC/USDT and the pair wstETH/WETH, which consists of the non-rebasing liquid staking token wstETH and the wrapped token WETH, which is an ERC-20 compliant representation of the underlying native token~(ETH).\\

We begin by presenting the data and the results of the parameter estimation method discussed in Section~\ref{section:estim-nestedOU}. Next, we show the outcomes of the filtering techniques presented in Section~\ref{sec:filter-procedure} and explore different perspectives on the parameters driving the underlying process \( (U_t)_t \). Finally, we provide numerical simulations of the AMM strategy developed in Section~\ref{sec:optimal-model} (hereafter referred to as {NOU AMM}), first on simulated prices following multi-level nested OU dynamics to approximate the efficient frontier \textit{à la} Markowitz, and then using real prices. In the numerical simulations, we compare the outcomes of the {NOU AMM} strategy to those of the optimal strategy assuming the price follows a geometric Brownian motion. This strategy, described in detail in \cite{bergault2024automated}, is referred to as {GBM AMM} in this paper.

\subsection{Data}

To obtain indicative prices for the USDC/USDT and wstETH/WETH pairs, we first collected data for the following pairs: USDT/USD, USDC/USD, WETH/USD, and wstETH/USD. These dollar-denominated time series were retrieved from Pyth's Benchmarks API\footnote{See https://docs.pyth.network/benchmarks.} at a per-second frequency, with forward filling applied to handle missing data, covering the 3-month period from January 1, 2024, to March 31, 2024. Using these four time series, we then derived indicative price series for USDC/USDT and wstETH/WETH.\\

Figures \ref{fig:pyth-raw-data1} and \ref{fig:pyth-raw-data2} display the results of these computations. The mean-reverting characteristics of the stablecoin pair USDC/USDT, which motivated our multi-level nested OU model, are clearly visible. For the wstETH/WETH pair, we observe a trend around which the series exhibits rapid mean-reverting behavior.

\begin{figure}[!h]
    \centering
    \includegraphics[width=0.8\textwidth]{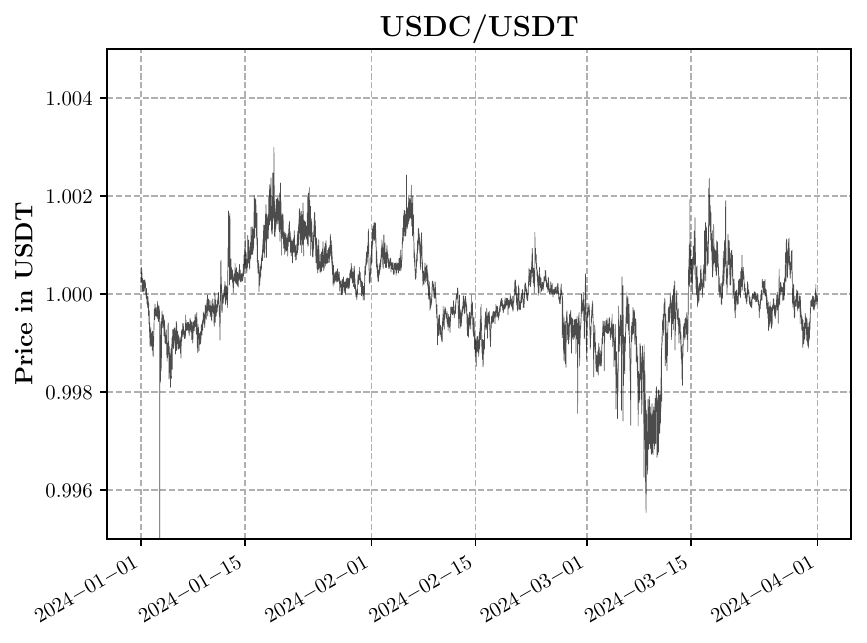}
    \caption{Price of USDC/USDT constructed from dollar-denominated Pyth data between January 1, 2024, and March 31, 2024.}
    \label{fig:pyth-raw-data1}
\end{figure}

\begin{figure}[!h]
    \centering
    \includegraphics[width=0.8\textwidth]{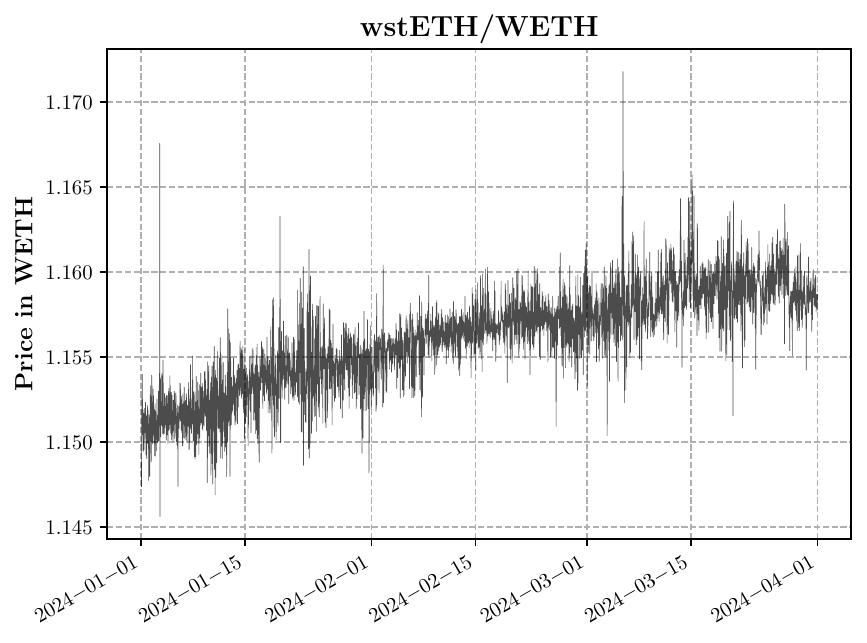}
    \caption{Price of wstETH/WETH constructed from dollar-denominated Pyth data between January 1, 2024, and March 31, 2024.}
    \label{fig:pyth-raw-data2}
\end{figure}

\subsection{Empirical Estimation of the Parameters}
\label{sec:calib-nestedOU}

Following Section \ref{section:estim-nestedOU}, we shall estimate the parameters of the multi-level nested OU model for the two pairs USDC/USDT and wstETH/WETH using a maximum likelihood procedure. For the wstETH/WETH pair, however, prices must first be adjusted to account for the yield generated by the staking activity, as described in Section \ref{section:ext-lst}.\\

To statistically isolate the component associated with the staking yield, we estimated an average yield by regressing the log-prices on time. This yielded an annualized rate \(r = 2.94\%\), consistent with observed staking yields during the period. We then discounted the original time series to obtain the adjusted series used in the maximum likelihood estimation. Figure \ref{fig:data-detrended-staked-assets} displays both the original and discounted time series.

\begin{figure}[!h]
    \centering
    \includegraphics[width=0.8\textwidth]{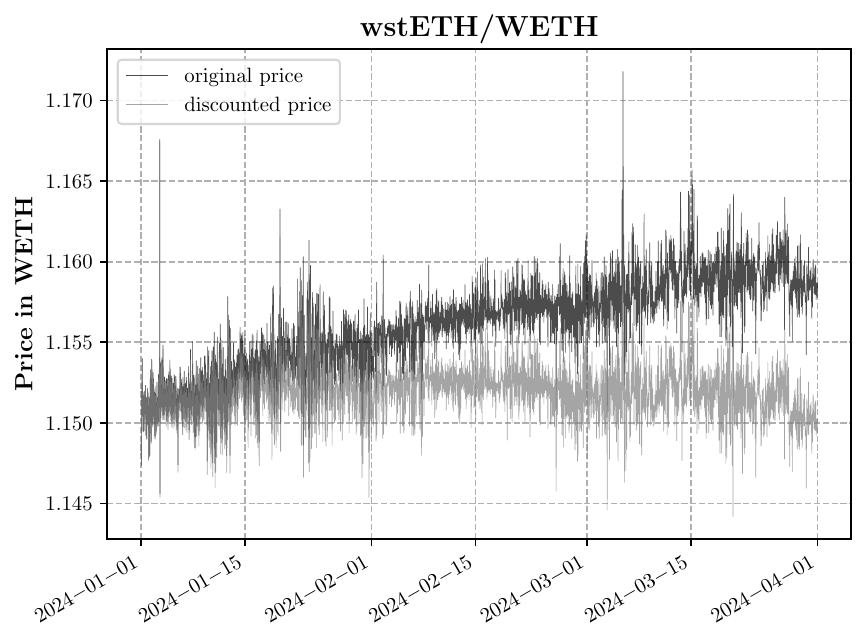}
    \caption{Price of wstETH/WETH between January 1, 2024, and March 31, 2024. The dark line represents the original price, while the light line shows the discounted price.}
    \label{fig:data-detrended-staked-assets}
\end{figure}

Tables \ref{table:params-stable} and \ref{table:params-LST} present the results of the parameter estimation for each month within the 3-month data interval. The maximum log-likelihood procedure described in Section \ref{section:estim-nestedOU} was applied to the time series resampled at a 15-minute frequency to mitigate microstructural noise.\\

\begin{table}[h!]
\renewcommand{\arraystretch}{1.2}
\begin{center}
\begin{tabular}{c||c|c||c|c|c}
&\multicolumn{2}{c||}{1st-level OU} & \multicolumn{3}{c}{2nd-level OU}\\
\hline
Estimation Period&Mean reversion&Volatility&Mean reversion&Volatility&Target\\
\hline
&$\kappa$ &$\sigma$ &$\eta$ &$\nu$ &$\bar{U}$\\
\hline
01-2024 & $4.57\times10^{-2}$ & $5.23\times10^{-4}$ & $1.6\times10^{-2}$ & $5.00\times10^{-4}$ & $1.0003$\\

02-2024 & $5.46\times10^{-2}$ & $5.22\times10^{-4}$ & $5.29\times10^{-2}$ & $5.45\times10^{-5}$ & $1.0000$\\%

03-2024 & $4.92\times10^{-2}$ & $5.21\times10^{-4}$ & $2.96\times10^{-2}$ & $4.77\times10^{-4}$ & $0.9996$\\%

\end{tabular}
\captionsetup{width=0.95\textwidth}
\caption{Estimated parameters for the price dynamics of USDC/USDT. Units are day$^{-1}$ for the mean reversion parameters $\kappa$ and $\eta$, USDT $\cdot$ day$^{-\frac 12}$ for the volatility parameters $\sigma$ and $\nu$, and USDT for $\bar U$.}
\label{table:params-stable}
\end{center}
\end{table}

\begin{table}[h!]
\renewcommand{\arraystretch}{1.2}
\begin{center}
\begin{tabular}{c||c|c||c|c|c}
&\multicolumn{2}{c||}{1st-level OU} & \multicolumn{3}{c}{2nd-level OU}\\
\hline
Estimation Period&Mean reversion&Volatility&Mean reversion&Volatility&Target\\
\hline
&$\kappa$ &$\sigma$ &$\eta$ &$\nu$ &$\bar{U}$\\
\hline
01-2024 & $10.09$ & $4.99\times10^{-3}$ & $4.90$ & $4.65\times10^{-3}$ & $1.1510$\\%

02-2024 & $6.11$ & $5.04\times10^{-3}$ & $2.14$ & $2.49\times10^{-3}$ & $1.1553$\\%

03-2024 & $8.82$ & $4.99\times10^{-3}$ & $4.78$ & $4.89\times10^{-3}$ & $1.1579$\\%

\end{tabular}
\captionsetup{width=0.95\textwidth}
\caption{Estimated parameters for the price dynamics of wstETH/WETH. Units are day$^{-1}$ for the mean reversion parameters $\kappa$ and $\eta$, WETH $\cdot$ day$^{-\frac 12}$ for the volatility parameters $\sigma$ and $\nu$, and WETH for $\bar U$.}
\label{table:params-LST}
\end{center}
\end{table}

Consistent with our observations, the estimated parameters reveal significant differences in the mean-reverting properties of the two time series. The stablecoin pair exhibits much slower mean reversion compared to the wstETH/WETH pair. Despite some variability, the order of magnitude of the coefficients remains consistent across months. Consequently, in the subsequent analysis, we use the rounded figures from Tables \ref{table:used-params-stable} and \ref{table:used-params-LST}, which align with our estimations.\\

\begin{table}[h!]
\renewcommand{\arraystretch}{1.2}
\begin{center}
\begin{tabular}{c|c||c|c|c}
\multicolumn{2}{c||}{1st-level OU} & \multicolumn{3}{c}{2nd-level OU}\\
\hline
Mean reversion&Volatility&Mean reversion&Volatility&Target\\
\hline
$\kappa$ &$\sigma$ &$\eta$ &$\nu$ &$\bar{U}$\\
\hline
 $5\times10^{-2}$ & $5\times10^{-4}$ & $3\times10^{-2}$ & $5\times10^{-4}$ & $1.00$ \\
\end{tabular}
\captionsetup{width=0.95\textwidth}
\caption{Parameters used for USDC/USDT. Units are day$^{-1}$ for the mean reversion parameters $\kappa$ and $\eta$, USDT $\cdot$ day$^{-\frac 12}$ for the volatility parameters $\sigma$ and $\nu$, and USDT for $\bar U$.}
\label{table:used-params-stable}
\end{center}
\end{table}

\begin{table}[h!]
\renewcommand{\arraystretch}{1.2}
\begin{center}
\begin{tabular}{c|c||c|c|c}
\multicolumn{2}{c||}{1st-level OU} & \multicolumn{3}{c}{2nd-level OU}\\
\hline
Mean reversion&Volatility&Mean reversion&Volatility&Target\\
\hline
$\kappa$ &$\sigma$ &$\eta$ &$\nu$ &$\bar{U}$\\
\hline
 $6$ & $6\times10^{-3}$ & $3$ & $4\times10^{-3}$ & $1.15$ \\
\end{tabular}
\captionsetup{width=0.95\textwidth}
\caption{Parameters used for wstETH/WETH. Units are day$^{-1}$ for the mean reversion parameters $\kappa$ and $\eta$, WETH $\cdot$ day$^{-\frac 12}$ for the volatility parameters $\sigma$ and $\nu$, and WETH for $\bar U$.}
\label{table:used-params-LST}
\end{center}
\end{table}

\subsection{Filtering: Estimating the Unobservable Price Target}

The multi-level nested OU process introduces the unobservable process \( (U_t)_t \), which acts as the dynamic mean-reversion target in price evolution. Since \( (U_t)_t \) cannot be directly observed, we employ the filtering technique described in Section \ref{sec:filter-procedure} to estimate its value.\\

Figure \ref{fig:filter-stable} displays the results of this filtering procedure for the USDC/USDT pair, showing the evolution of the estimated process \( (\widehat{U}_t)_t \) over the entire sample period, using the parameters listed in Table \ref{table:used-params-stable}. For the wstETH/WETH pair,\footnote{From this point forward, we work exclusively with discounted prices.} Figure \ref{fig:filter-LST} shows the corresponding filtered process \( (\widehat{U}_t)_t \), computed with the parameters given in Table \ref{table:used-params-LST}. To offer a more detailed perspective on the outcomes of the filtering process for this pair, Figure \ref{fig:filter-LST-zoom} zooms in on the first week of the sample period. Collectively, these figures highlight the effectiveness of the filtering technique in capturing the unobservable mean-reversion target across various pairs of pegged crypto assets.

\begin{figure}[!h]
    \centering
    \includegraphics[width=0.78\textwidth]{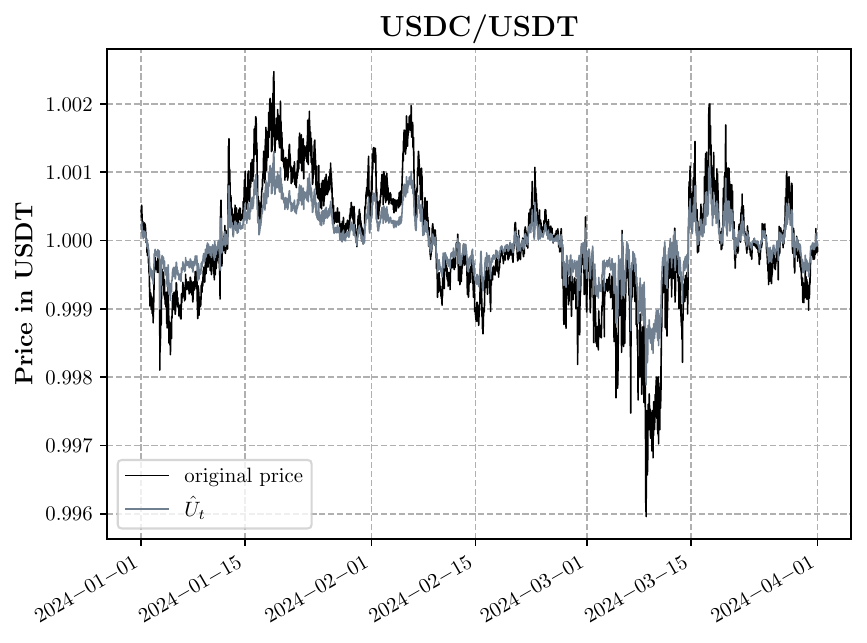}
    \caption{Results of the filtering procedure for USDC/USDT from January 1, 2024, to March 31, 2024. The dark line represents the original price, while the light line corresponds to the filtered series \( (\hat{U}_t)_t \), obtained using the parameters from Table \ref{table:used-params-stable}.}
    \label{fig:filter-stable}
\end{figure}

\begin{figure}[!h]
    \centering
    \includegraphics[width=0.78\textwidth]{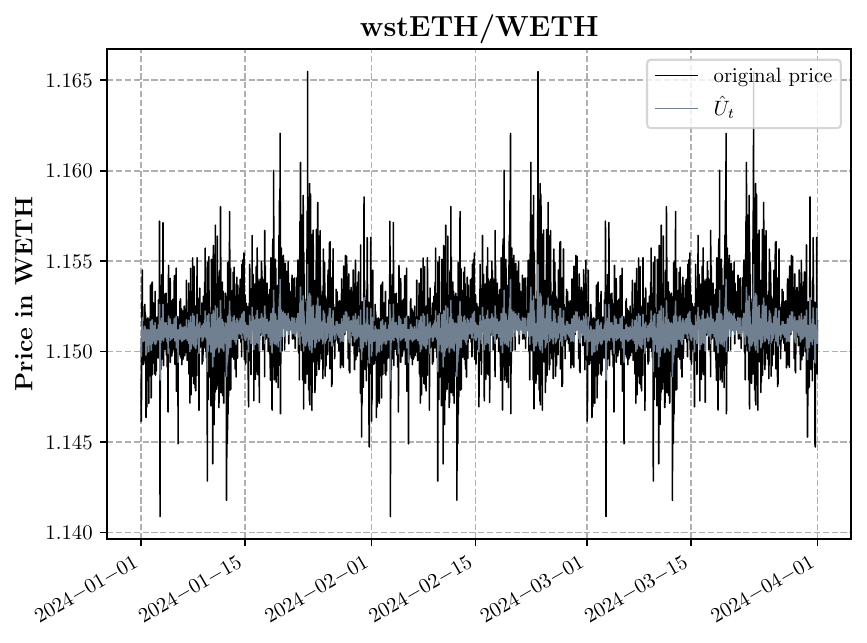}
    \caption{Results of the filtering procedure for wstETH/WETH from January 1, 2024, to March 31, 2024. The dark line represents the original price, while the light line corresponds to the filtered series \( (\hat{U}_t)_t \), obtained using the parameters from Table \ref{table:used-params-LST}.}
    \label{fig:filter-LST}
\end{figure}

\begin{figure}[!h]
    \centering
    \includegraphics[width=0.77\textwidth]{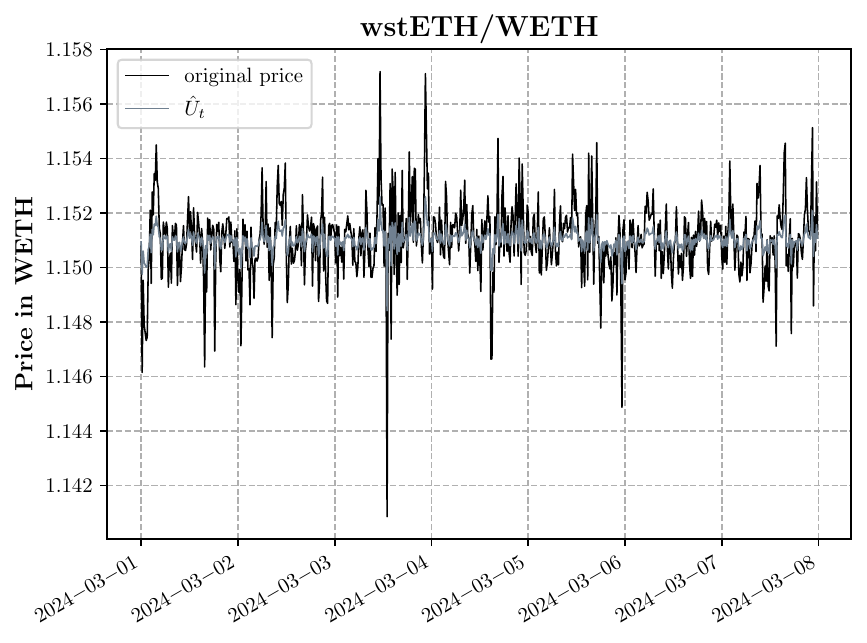}
    \caption{Results of the filtering procedure for wstETH/WETH from January 1, 2024, to January 7, 2024. The dark line represents the original price, while the light line corresponds to the filtered series \( (\hat{U}_t)_t \), obtained using the parameters from Table \ref{table:used-params-LST}.}
    \label{fig:filter-LST-zoom}
\end{figure}

\begin{figure}[!h]
    \centering
    \includegraphics[width=0.75\textwidth]{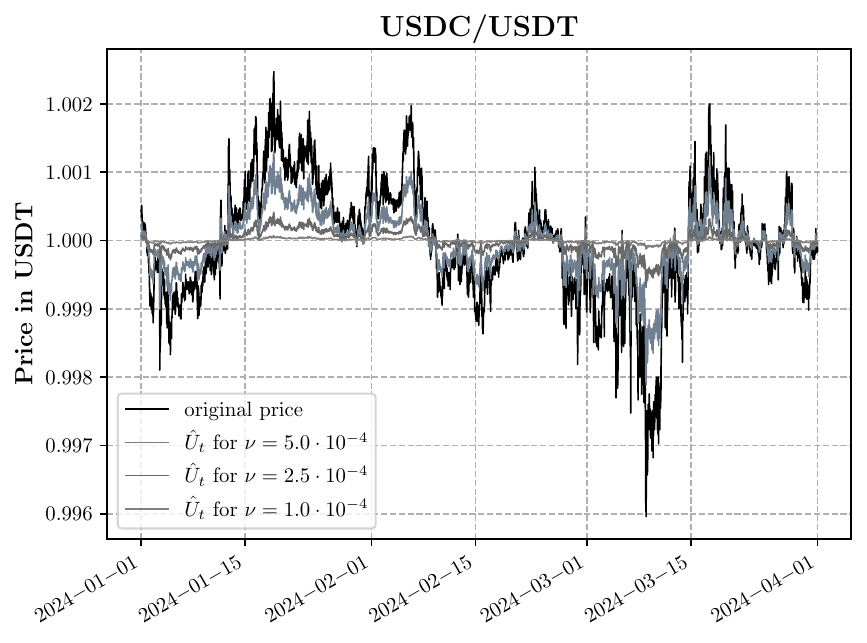}
    \caption{Results of the filtering procedure for USDC/USDT from January 1, 2024, to March 31, 2024 for several values of $\nu$. The dark line represents the original price, while the light lines correspond to the filtered series \( (\hat{U}_t)_t \), obtained using several values for $\nu$.}
    \label{fig:filter-many-nus1}
\end{figure}

In the preceding analysis, we applied the filtering procedure of Section \ref{sec:filter-procedure} using the parameters from Tables~\ref{table:used-params-stable} and \ref{table:used-params-LST}, which were determined through the statistical estimation procedure described in Section \ref{section:estim-nestedOU}. An interesting observation is that some of the parameters driving the underlying process \( (U_t)_t \) can be fixed rather than estimated. While this might initially seem counterintuitive, it can be viewed as a way to adjust the level of confidence in the target value of the actual price process.\\

To illustrate this, we ran the filtering algorithm with various values of \( \nu \). Figure \ref{fig:filter-many-nus1} presents the results for the USDC/USDT pair.\footnote{For readability, we focus solely on the stablecoin case here.} The results demonstrate that when \( \nu \) is large, the underlying process closely tracks the actual price, indicating low confidence in the peg (i.e., the long-term target). Conversely, when \( \nu \) is small, the hidden process remains near the long-term target, suggesting high confidence that the price process will revert in the short term towards \( \bar{U} \), as in a standard OU process.

\subsection{Numerical Simulation}

We now turn to the performance of the price-aware AMM strategy proposed in Section \ref{sec:optimal-model}: NOU AMM. Our analysis begins with simulated price data following a multi-level nested OU process and then extends to real price data.\\

To test the performance on simulated price data, we use the parameters from Table \ref{table:used-params-stable}, which correspond to the USDC/USDT pair. For the trade flow and liquidity parameters, we assume a fixed trade size of $100,000$~USDT and the following values:
\[
\lambda^{0,1} = \lambda^{1,0} = 250\ \text{day}^{-1}, \quad a^{0,1} = a^{1,0} = 0, \quad b^{0,1} = b^{1,0} = 10,000\ \text{USDT}^{-1}.
\]

These values imply an average of $125$ trades per day on each side when the proposed exchange rate matches the reference rate, $183$ trades per day when it offers a more favorable rate by $1$ basis point, and $67$ trades per day when it offers a less favorable rate by $1$ basis point.\\

We conducted simulations by running 300 paths of the multi-level nested OU process over one day and simulating the performance of the AMM strategy for a range of risk aversion parameters \( \gamma \) spanning the full spectrum of relevant values. The strategy was tested under a random trade flow characterized by the above liquidity parameters. For each \( \gamma \), we recorded the average excess P\&L and the standard deviation across the 300 simulations.  The simulation results are presented in Figure \ref{fig:NOUeff}. As expected, a highly risk-averse agent achieves negligible excess P\&L with minimal risk, while reducing the risk aversion parameter results in higher excess P\&L at the cost of increased standard deviation.\\

Unsurprisingly, the NOU AMM strategy, which is based on the multi-level nested OU dynamics used to simulate prices, outperforms the GBM AMM strategy proposed in \cite{bergault2024automated}, which assumes geometric Brownian motion dynamics.\footnote{The value of the volatility parameter does not matter for the GBM AMM because our simulations cover a large range of values of the risk aversion parameter $\gamma$.}\\

To test whether the proposed AMM strategy is effective in practice we then applied it to real price data. To this end, we used the same simulation setup as described above, replacing the simulated price paths with the actual USDC/USDT time series starting from 300 random times. Figure \ref{fig:eff-stable} compares the performance of the {NOU AMM} and GBM AMM strategies. The results closely mirror those obtained with simulated data, indicating that multi-level nested OU processes are particularly well-suited for modeling the risk and identifying trading opportunities associated with the USDC/USDT pair.\\

It is noteworthy that for both simulated and real USDC/USDT prices, the performance of the GBM AMM is not as good as that of the NOU AMM, yet remains respectable despite relying on a price dynamics that is misaligned with the actual behavior of USDC/USDT. This can be attributed to the time scale: given the liquidity parameters of the USDC/USDT pair, inventory round trips can occur over short intervals, during which the price dynamics can be reasonably approximated by a geometric Brownian motion.\\

\begin{figure}[!h]
    \centering
    \includegraphics[width=0.74\textwidth]{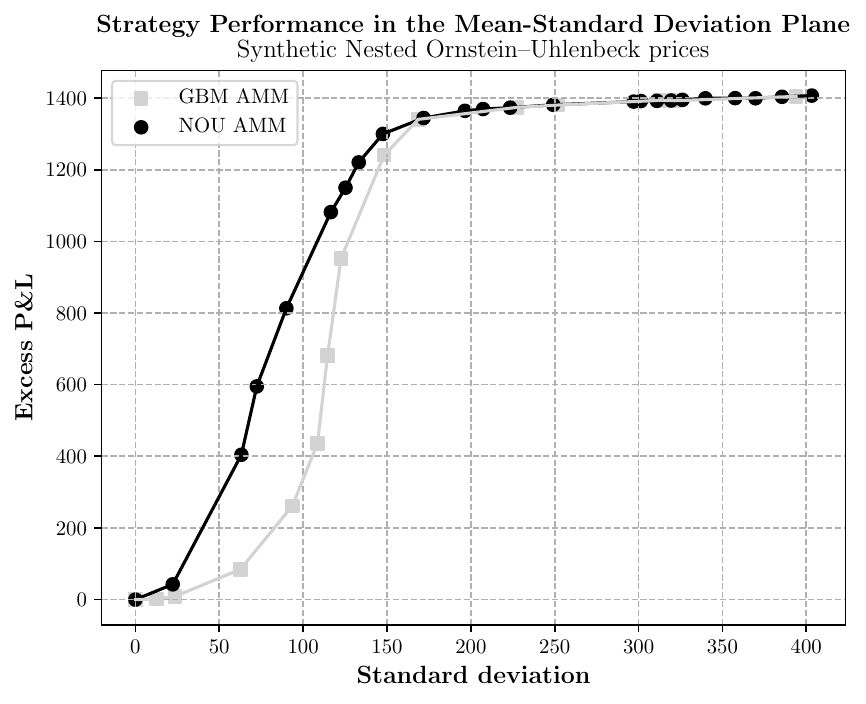}
    \caption{Strategy performance on simulated prices using a multi-level nested OU process with the parameters of Table \ref{table:used-params-stable} for different risk aversion parameters.}
    \label{fig:NOUeff}
\end{figure}

\begin{figure}[!h]
    \centering
    \includegraphics[width=0.74\textwidth]{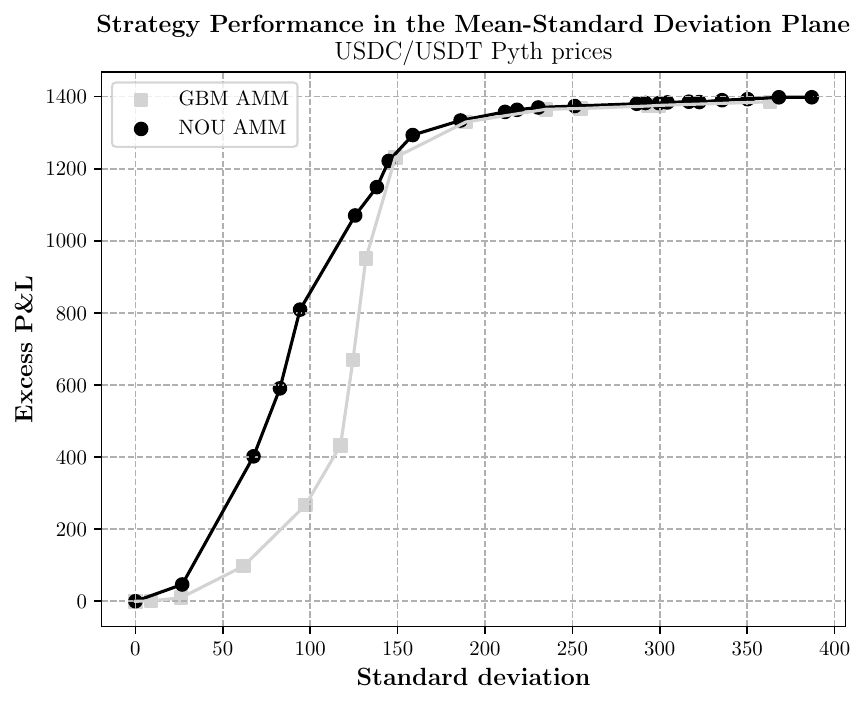}
    \caption{Strategy performance on real USDC/USDT data for different risk aversion parameters.}
    \label{fig:eff-stable}
\end{figure}

For the wstETH/WETH pair, we carried out the same simulations, directly on real price data, by using the NOU AMM strategy for the price parameters of Table \ref{table:used-params-LST} and trade flow and liquidity parameters given by a fixed trade size of $40$ WETH and the following values:
\[
\lambda^{0,1} = \lambda^{1,0} = 250\ \text{day}^{-1}, \quad a^{0,1} = a^{1,0} = 0, \quad b^{0,1} = b^{1,0} = 10,000\ \text{WETH}^{-1}.
\]

The simulation results are presented in Figure \ref{fig:eff-lst}. As before, a highly risk-averse agent achieves negligible excess P\&L with minimal risk, while reducing the risk aversion parameter leads to higher excess P\&L at the cost of increased standard deviation. The decrease in the performance of the NOU AMM on the right side of the plot for small values of \( \gamma \) is merely a result of random liquidity-taking and is not statistically significant.\\

Unlike the USDC/USDT case, the performance of the {GBM AMM} is significantly inferior to that of the {NOU AMM} for the wstETH/WETH pair. The fast mean-reverting behavior of wstETH/WETH renders geometric Brownian motion ill-suited to model the price process, even on the short time scale of inventory round trips. The results clearly show that the {NOU AMM} leverages the multi-level nested OU dynamics to achieve levels of average excess P\&L that outperform those of the {GBM AMM} by far.

\begin{figure}[h]
    \centering
    \includegraphics[width=0.75\textwidth]{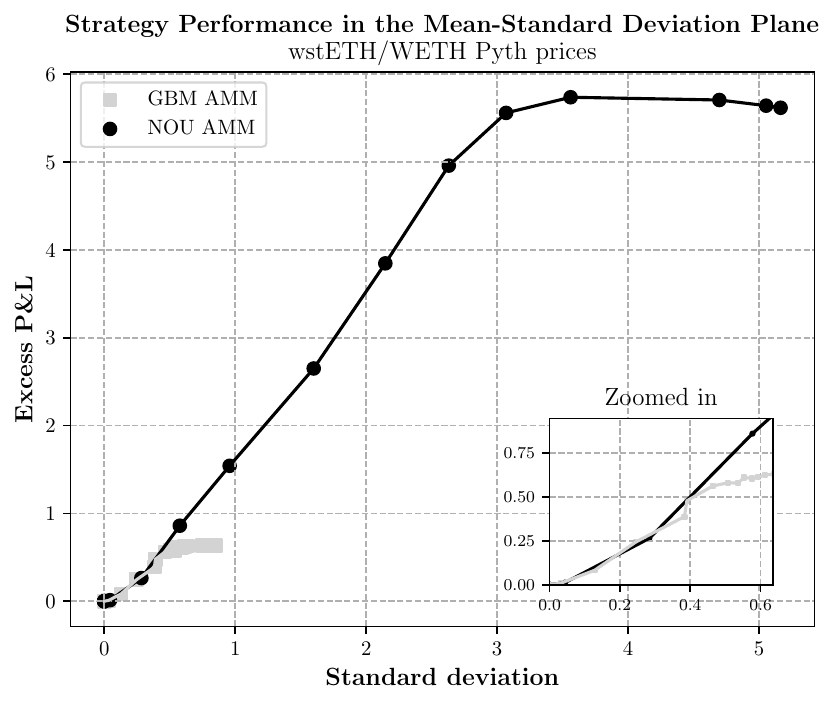}
    \caption{Strategy performance on real wstETH/WETH prices for different risk aversion parameters.}
    \label{fig:eff-lst}
\end{figure}

\section*{Conclusion}

As the cryptocurrency and decentralized finance ecosystems continue to expand, stablecoins and liquid staking tokens are attracting significant trading volumes. Developing effective liquidity provision algorithms tailored to these specific assets is essential, as they differ markedly from traditional cryptos in their mean-reverting price dynamics.\\

In this paper, we introduced a framework that employs multi-level nested Ornstein-Uhlenbeck processes to capture the exchange rate dynamics of intrinsically linked assets, such as stablecoins pegged to the same fiat currency and liquid staking tokens alongside their native tokens. The automated market maker (AMM) strategy we developed within this framework leverages mean-reverting dynamics, offering significant improvements in profitability and risk management compared to existing approaches. Our findings underscore the potential of this model to enhance liquidity in markets for pegged assets, paving the way for more robust trading mechanisms as crypto markets continue to evolve.

\section*{Statement and Acknowledgment}

This research was supported by the Research Program ``Decentralized Finance and Automated Market Makers,'' a program under the aegis of Institut Europlace de Finance, in partnership with Apesu / Swaap Labs.\\

Olivier Guéant would like to express his gratitude to Arthur Charpentier, Jean-David Fermanian, Christian Francq, Christian Gouriéroux, and Jean-Michel Zakoïan for their insightful discussions on the generalizations of OU processes. Philippe Bergault and Olivier Guéant also wish to thank Alexander (Sasha) Barzykin for their stimulating discussions on stochastic filtering.\\

\nocite{*}
\bibliographystyle{plain}

\newpage

\appendix

\section*{Appendix on Stochastic Filtering}

In this appendix, we provide essential results of stochastic filtering. We consider a filtered probability space \((\Omega, \mathbb{F} = (\mathcal{F}_t)_t, \mathbb{P})\) that satisfies the usual assumptions. All random variables used hereafter are assumed to be defined on this probability space and the initial stochastic processes (before filtering) adapted to the filtration.\\

We consider two stochastic processes $(Z_t)_t$ and $(\zeta_t)_t$ with values in $\mathbb R^k$ and $\mathbb R^d$ respectively, solutions of the following stochastic differential equations:
\begin{align*}
    dZ_t &= \Gamma \begin{pmatrix} Z_t\\ \zeta_t \end{pmatrix} dt + {\Sigma_Z^{\frac 12}} dW^Z_t\\
    d\zeta_t &= (\Theta\zeta_t+\upsilon) dt + {\Sigma_\zeta^{\frac 12}} dW^\zeta_t,
\end{align*}
where $\begin{pmatrix}W^Z \\ W^\zeta \end{pmatrix}$ is a $k+d$-dimensional Brownian motion with covariance structure given by
$$\rho = \left(\begin{array}{c|c}
I_k & \tilde \rho \\
\hline
\tilde \rho^\intercal & I_d\\
\end{array}\right),$$
where $\tilde \rho \in \mathcal M_{k,d}(\mathbb R)$.\\

For the dynamics of $(Z_t)_t$, we used two matrices $\Gamma\in \mathcal M_{k, k+d}(\mathbb R)$ and ${\Sigma_Z} \in \mathcal S^{++}_{k}(\mathbb R)$. For the dynamics of $(\zeta_t)_t$, we used the matrices $\Theta\in \mathcal M_{d, d}(\mathbb R)$ and ${\Sigma_\zeta} \in \mathcal S^+_{d}(\mathbb R)$, and the vector $\upsilon\in \mathbb R^d$.

\begin{rem}
In what follows, we assume that $Z_0$ and $\zeta_0$ are respectively constant and Gaussian. Subsequently the subspace of $L^2(\Omega)$ spanned by $\{Z_t, \zeta_t | t \ge 0\}$ is a Gaussian space.
\end{rem}

We assume that we only observe the process $(Z_t)_t$, and we want to estimate the value of the process $(\zeta_t)_t$ at each point in time. Denoting by $\mathbb F^Z = (\mathcal F^Z_t)_t$ the natural filtration associated with $(Z_t)_t$, our process of interest is $(\widehat \zeta_t)_t$ defined by
$$\widehat \zeta_t = \mathbb E \left[\zeta_t | \mathcal F^Z_t \right].$$

We start with a first lemma on martingales:

\begin{lemma}
\label{lemma_martingale}
Let $(\Psi_t)_t$ and $(\psi_t)_t$ be two $\mathbb F$-adapted processes. We further assume that for all $T >0$,
$\mathbb E \left[\displaystyle\int_0^T \| \psi_t\| dt \right]<+\infty$.\\

If the process $\left({\Psi}_t  - \displaystyle\int_0^t \psi_s ds\right)_t$ is an $\mathbb F$-martingale, then the process 
$\left(\mathbb{E}[{\Psi}_t|\mathcal F_t^Z] - \displaystyle\int_0^t  \mathbb{E}[\psi_s|\mathcal F_s^Z]  ds \right)_t  $ is well defined and is an $\mathbb F^Z$-martingale.
\end{lemma}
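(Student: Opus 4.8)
The plan is to verify the two defining properties of an $\mathbb{F}^Z$-martingale for the process $N_t := \mathbb{E}[\Psi_t \mid \mathcal{F}_t^Z] - \int_0^t \mathbb{E}[\psi_s \mid \mathcal{F}_s^Z]\,ds$: (i) adaptedness and integrability, and (ii) the tower/projection identity $\mathbb{E}[N_t \mid \mathcal{F}_u^Z] = N_u$ for $u \le t$. First I would observe that $\mathbb{E}[\Psi_t \mid \mathcal{F}_t^Z]$ is by construction $\mathcal{F}_t^Z$-measurable, and that the integral term $\int_0^t \mathbb{E}[\psi_s \mid \mathcal{F}_s^Z]\,ds$ is $\mathcal{F}_t^Z$-measurable and well defined because, by Jensen's inequality and the assumption $\mathbb{E}\big[\int_0^T \|\psi_t\|\,dt\big] < \infty$, we have $\mathbb{E}\big[\int_0^t \|\mathbb{E}[\psi_s \mid \mathcal{F}_s^Z]\|\,ds\big] \le \mathbb{E}\big[\int_0^t \mathbb{E}[\|\psi_s\| \mid \mathcal{F}_s^Z]\,ds\big] = \mathbb{E}\big[\int_0^t \|\psi_s\|\,ds\big] < \infty$; in particular the integrand is integrable for a.e.\ $s$, so the integral makes sense pathwise. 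Integrability of $\mathbb{E}[\Psi_t\mid\mathcal{F}_t^Z]$ follows from that of $\Psi_t$, which is implied by the martingale hypothesis on $\big(\Psi_t - \int_0^t\psi_s\,ds\big)_t$ together with the integrability of $\int_0^t\psi_s\,ds$ just established.

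For the martingale property, fix $0 \le u \le t$ and condition on $\mathcal{F}_u^Z$. Using the tower property with $\mathcal{F}_u^Z \subset \mathcal{F}_u \subset \mathcal{F}_t$, write $\mathbb{E}[\mathbb{E}[\Psi_t \mid \mathcal{F}_t^Z] \mid \mathcal{F}_u^Z] = \mathbb{E}[\Psi_t \mid \mathcal{F}_u^Z] = \mathbb{E}[\mathbb{E}[\Psi_t \mid \mathcal{F}_u] \mid \mathcal{F}_u^Z]$. Now invoke the $\mathbb{F}$-martingale hypothesis: $\mathbb{E}[\Psi_t \mid \mathcal{F}_u] = \Psi_u + \mathbb{E}\big[\int_u^t \psi_s\,ds \mid \mathcal{F}_u\big]$. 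Taking $\mathbb{E}[\,\cdot \mid \mathcal{F}_u^Z]$ of both sides and combining with the analogous expansion of the integral term gives
\begin{align*}
\mathbb{E}[N_t \mid \mathcal{F}_u^Z] &= \mathbb{E}[\Psi_u \mid \mathcal{F}_u^Z] + \mathbb{E}\Big[\int_u^t \psi_s\,ds \,\Big|\, \mathcal{F}_u^Z\Big] - \mathbb{E}\Big[\int_0^t \mathbb{E}[\psi_s \mid \mathcal{F}_s^Z]\,ds \,\Big|\, \mathcal{F}_u^Z\Big].
\end{align*}
It then remains to show that $\mathbb{E}\big[\int_u^t \psi_s\,ds \mid \mathcal{F}_u^Z\big] = \mathbb{E}\big[\int_u^t \mathbb{E}[\psi_s\mid\mathcal{F}_s^Z]\,ds \mid \mathcal{F}_u^Z\big]$, after which the integral terms telescope correctly and we recover $N_u$.

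That last identity is the technical heart of the argument and the step I expect to require the most care. The plan is to first apply a conditional Fubini–Tonelli theorem (justified by the $L^1$ bound above) to interchange $\mathbb{E}[\,\cdot\mid\mathcal{F}_u^Z]$ with $\int_u^t$ on both sides, reducing the claim to showing $\mathbb{E}[\psi_s \mid \mathcal{F}_u^Z] = \mathbb{E}[\mathbb{E}[\psi_s \mid \mathcal{F}_s^Z] \mid \mathcal{F}_u^Z]$ for a.e.\ $s \in [u,t]$; but since $u \le s$ we have $\mathcal{F}_u^Z \subseteq \mathcal{F}_s^Z$, so this is just the tower property and holds trivially. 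Assembling these pieces yields $\mathbb{E}[N_t \mid \mathcal{F}_u^Z] = N_u$, completing the proof. The only genuine subtlety is ensuring all the integrands are jointly measurable in $(\omega, s)$ so that the conditional Fubini theorem applies; this follows from choosing a progressively measurable version of the conditional expectation process $s \mapsto \mathbb{E}[\psi_s \mid \mathcal{F}_s^Z]$, which exists by optional projection, and I would note this explicitly rather than belabor it.
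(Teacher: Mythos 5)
Your proof is correct and takes essentially the same route as the paper's: both arguments come down to the tower property ($\mathcal{F}_u^Z \subseteq \mathcal{F}_u$ and $\mathcal{F}_u^Z \subseteq \mathcal{F}_s^Z$ for $s \ge u$) combined with a Fubini-type interchange of conditional expectation and the time integral, applied to the $\mathbb{F}$-martingale hypothesis. The only difference is organizational -- the paper first observes that $\left(\mathbb{E}\left[\Psi_t - \int_0^t \psi_s\,ds \,\middle|\, \mathcal{F}_t^Z\right]\right)_t$ is an $\mathbb{F}^Z$-martingale and then checks that the correction term $\int_0^t\left(\mathbb{E}[\psi_s\mid\mathcal{F}_t^Z]-\mathbb{E}[\psi_s\mid\mathcal{F}_s^Z]\right)ds$ is one as well, whereas you verify $\mathbb{E}[N_t\mid\mathcal{F}_u^Z]=N_u$ in a single direct computation; your explicit treatment of integrability and of measurable versions of $s\mapsto\mathbb{E}[\psi_s\mid\mathcal{F}_s^Z]$ is a welcome addition the paper leaves implicit.
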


\begin{proof}
If $\left({\Psi}_t  - \displaystyle\int_0^t \psi_s ds\right)_t$ is an $\mathbb F$-martingale, then the process 
$\left(\mathbb{E}[{\Psi}_t|\mathcal F_t^Z] - \displaystyle\int_0^t  \mathbb{E}[\psi_s|\mathcal F_t^Z]  ds \right)_t  $ is well defined and is an $\mathbb F^Z$-martingale.\\

We therefore only need to prove that the process 
$\left(\displaystyle\int_0^t  \left(\mathbb{E}[\psi_s|\mathcal F_t^Z] - \mathbb{E}[\psi_s|\mathcal F_s^Z]\right) ds \right)_t  $
is an $\mathbb F^Z$-martingale.\\

But, for all $t, h>0$, we have indeed
\begin{eqnarray*}
&&\mathbb{E}\left[\left.\int_0^{t+h}  \left(\mathbb{E}[\psi_s|\mathcal F_{t+h}^Z] - \mathbb{E}[\psi_s|\mathcal F_s^Z]\right) ds\right|\mathcal F_t^Z\right]\\
&=& \mathbb{E}\left[\left.\int_0^{t}  \left(\mathbb{E}[\psi_s|\mathcal F_{t+h}^Z] - \mathbb{E}[\psi_s|\mathcal F_s^Z]\right) ds\right|\mathcal F_t^Z\right] + \mathbb{E}\left[\left.\int_t^{t+h}  \left(\mathbb{E}[\psi_s|\mathcal F_{t+h}^Z] - \mathbb{E}[\psi_s|\mathcal F_s^Z]\right) ds\right|\mathcal F_t^Z\right] \\
&=& \int_0^{t}  \left(\mathbb{E}[\psi_s|\mathcal F_{t}^Z] - \mathbb{E}[\psi_s|\mathcal F_s^Z]\right) ds + \int_t^{t+h}  \left(\mathbb{E}[\psi_s|\mathcal F_{t}^Z] - \mathbb{E}[\psi_s|\mathcal F_t^Z]\right) ds \\
&=& \int_0^{t}  \left(\mathbb{E}[\psi_s|\mathcal F_{t}^Z] - \mathbb{E}[\psi_s|\mathcal F_s^Z]\right) ds,\\
\end{eqnarray*}
hence the result.\\
\end{proof}

We can then state a proposition related to the innovation process:

\begin{prop}
Let us denote by $(\widehat W^Z_t)_t$ the process defined by
$$\widehat W^Z_t = W^Z_t + \int_0^t {\Sigma_Z^{-\frac 12}}\Gamma \begin{pmatrix}
    0\\
     \zeta_s - \widehat \zeta_s
\end{pmatrix}ds.$$
Then, $(\widehat W^Z_t)_t$ is an $\mathbb F^Z$-Brownian motion. Moreover, $\mathbb F^Z$ is the natural filtration associated with $(\widehat W^Z_t)_t$.
\end{prop}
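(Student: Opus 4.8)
The plan is to rewrite $(\widehat W^Z_t)_t$ in a form that is manifestly $\mathbb F^Z$-adapted, verify that it is a continuous $\mathbb F^Z$-martingale whose quadratic covariation is that of a Brownian motion, and then conclude by Lévy's characterization; the statement about filtrations is treated as a separate point. Concretely, I would first decompose $\Gamma = (\Gamma_1 \mid \Gamma_2)$ along the block structure of $\begin{pmatrix} Z_s \\ \zeta_s \end{pmatrix}$, so that $\Gamma\begin{pmatrix} 0 \\ \zeta_s - \widehat\zeta_s \end{pmatrix} = \Gamma_2(\zeta_s - \widehat\zeta_s)$. Substituting the dynamics of $(Z_t)_t$ into the definition of $\widehat W^Z$ and cancelling the $\Gamma_2\zeta_s$ contributions produces the innovation identity
$$ {\Sigma_Z^{\frac 12}}\,\widehat W^Z_t \;=\; Z_t - Z_0 - \int_0^t \Gamma\begin{pmatrix} Z_s \\ \widehat\zeta_s \end{pmatrix} ds, $$
whose right-hand side is $\mathbb F^Z$-adapted because $(Z_t)_t$ is observed and $\widehat\zeta_s = \mathbb E[\zeta_s \mid \mathcal F^Z_s]$ is $\mathbb F^Z$-adapted by construction; in particular $\widehat W^Z$ is $\mathbb F^Z$-adapted.

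Next I would establish the martingale property via Lemma \ref{lemma_martingale} applied with $\Psi_t = Z_t$ and $\psi_s = \Gamma\begin{pmatrix} Z_s \\ \zeta_s \end{pmatrix}$: the process $\Psi_t - \int_0^t \psi_s\,ds = Z_0 + {\Sigma_Z^{\frac 12}} W^Z_t$ is an $\mathbb F$-martingale, and the integrability hypothesis $\mathbb E\!\left[\int_0^T \|\psi_s\|\,ds\right] < \infty$ holds since $(Z,\zeta)$, being the solution of a linear SDE, has finite moments that are locally bounded in time. The lemma then gives that $\mathbb E[Z_t \mid \mathcal F^Z_t] - \int_0^t \mathbb E[\psi_s \mid \mathcal F^Z_s]\,ds$ is an $\mathbb F^Z$-martingale; using $\mathbb E[Z_t \mid \mathcal F^Z_t] = Z_t$ and, by linearity of $\Gamma$, $\mathbb E[\psi_s \mid \mathcal F^Z_s] = \Gamma\begin{pmatrix} Z_s \\ \widehat\zeta_s \end{pmatrix}$, this expression is precisely $Z_0 + {\Sigma_Z^{\frac 12}}\widehat W^Z_t$, so $\widehat W^Z$ is a continuous $\mathbb F^Z$-martingale null at the origin.

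To upgrade this to a Brownian motion, I would note that $\widehat W^Z_t - W^Z_t = \int_0^t {\Sigma_Z^{-\frac 12}}\Gamma_2(\zeta_s - \widehat\zeta_s)\,ds$ has finite variation, so the quadratic covariation of $\widehat W^Z$ coincides with that of the standard $k$-dimensional Brownian motion $W^Z$, namely $\langle \widehat W^Z_i, \widehat W^Z_j \rangle_t = \delta_{ij}\, t$; Lévy's characterization theorem then yields that $(\widehat W^Z_t)_t$ is an $\mathbb F^Z$-Brownian motion.

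For the identification of $\mathbb F^Z$ with the (augmented) natural filtration of $\widehat W^Z$, one inclusion is immediate from the adaptedness already established. For the reverse inclusion I would rearrange the innovation identity into $dZ_t = (\Gamma_1 Z_t + \Gamma_2\widehat\zeta_t)\,dt + {\Sigma_Z^{\frac 12}}\,d\widehat W^Z_t$ and couple it with the evolution of $(\widehat\zeta_t)_t$ — a linear SDE in $\widehat\zeta_t$ driven by the same $\widehat W^Z$ with \emph{deterministic} coefficients (the Kalman--Bucy gain) — so that $(Z_t, \widehat\zeta_t)_t$ is the unique strong solution of a closed linear system driven by $\widehat W^Z$, whence $\mathcal F^Z_t \subseteq \mathcal F^{\widehat W^Z}_t$ modulo $\mathbb P$-null sets. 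I expect this reverse inclusion to be the main obstacle: it cannot be closed from the innovation identity alone and relies on the explicit deterministic-gain form of the filter together with strong uniqueness for linear SDEs, whereas everything else is a routine combination of Lemma \ref{lemma_martingale} and Lévy's theorem.
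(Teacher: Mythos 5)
Your proposal is correct and takes essentially the same route as the paper: the innovation identity $\Sigma_Z^{\frac 12}\widehat W^Z_t = Z_t - Z_0 - \int_0^t \Gamma\,(Z_s^\intercal,\widehat\zeta_s^\intercal)^\intercal\,ds$, Lemma~\ref{lemma_martingale} applied with $\Psi_t=Z_t$ and $\psi_s=\Gamma\,(Z_s^\intercal,\zeta_s^\intercal)^\intercal$, and then Lévy's characterization. The only difference is that you spell out points the paper leaves terse (the integrability hypothesis of the lemma, the quadratic-variation computation, and especially the reverse filtration inclusion, which, as you rightly anticipate, cannot be read off the innovation identity alone and requires closing the system with the filter SDE for $\widehat\zeta$ and its deterministic gain, established in the paper only in the subsequent propositions).
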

\begin{proof}
The process $\left(Z_t - \displaystyle \int_0^t \Gamma \begin{pmatrix}
    Z_s\\
     \zeta_s
\end{pmatrix}ds \right)_t$ is an $\mathbb F$-martingale. Using the result of Lemma \ref{lemma_martingale}, the process $\left(Z_t - \displaystyle \int_0^t \Gamma \begin{pmatrix}
    Z_s\\
     \widehat \zeta_s
\end{pmatrix}ds \right)_t$
is an $\mathbb F^Z$-martingale.\\

Since, $\widehat W^Z_t = {\Sigma_Z^{-\frac 12}}\left(Z_t - \displaystyle \int_0^t \Gamma \begin{pmatrix} Z_s\\ \widehat \zeta_s \end{pmatrix} ds\right)$, the process $(\widehat W^Z_t)_t$ is an $\mathbb F^Z$-martingale. Lévy's characterization then enables to conclude that $(\widehat W^Z_t)_t$ is an $\mathbb F^Z$-Brownian motion.\\

The last statement follows from the stochastic differential equation
$$dZ_t = \Gamma \begin{pmatrix} Z_t\\ \widehat \zeta_t \end{pmatrix} dt + {\Sigma_Z^{\frac 12}} d\widehat W^Z_t.$$
\end{proof}

Using the $\mathbb F^Z$-Brownian motion $(\widehat W^Z_t)_t$, we can state the stochastic differential equation corresponding to the process $(\widehat \zeta_t)_t$.\\

\begin{prop}
\label{zetachapo}
$(\widehat \zeta_t)_t$ solves the stochastic differential equation
$$ d\widehat \zeta_t = (\Theta\widehat \zeta_t+\upsilon) dt  +  \psi_t d\widehat W^Z_t,$$
where $$\psi_t = \begin{pmatrix} 0\\ \mathbb{V}( \zeta_t |\mathcal F_t^Z)\end{pmatrix}^\intercal \Gamma^\intercal {\Sigma_Z^{-\frac 12}}+ {\Sigma_\zeta^{\frac 12}}\tilde \rho^\intercal,$$ with $\mathbb{V}( \cdot |\mathcal F_t^Z)$ standing for the conditional variance-covariance matrix.\\
\end{prop}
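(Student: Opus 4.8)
The plan is to derive the dynamics of $(\widehat\zeta_t)_t$ by applying Lemma~\ref{lemma_martingale} to a suitable $\mathbb F$-martingale built from $\zeta_t$, and then to identify the integrand of the resulting $\mathbb F^Z$-martingale as a stochastic integral against the innovation Brownian motion $(\widehat W^Z_t)_t$. Concretely, from the SDE for $(\zeta_t)_t$ we have that
$$M_t := \zeta_t - \int_0^t (\Theta\zeta_s + \upsilon)\, ds = \zeta_0 + \Sigma_\zeta^{\frac12} W^\zeta_t$$
is an $\mathbb F$-martingale. Taking conditional expectations and invoking Lemma~\ref{lemma_martingale} (with $\Psi_t = \zeta_t$, $\psi_t = \Theta\zeta_t+\upsilon$, whose integrability is immediate since everything is Gaussian), the process $\widehat\zeta_t - \int_0^t (\Theta\widehat\zeta_s + \upsilon)\, ds$ is an $\mathbb F^Z$-martingale — using here that $\mathbb E[\Theta\zeta_s + \upsilon \mid \mathcal F^Z_s] = \Theta\widehat\zeta_s + \upsilon$ by linearity. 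This already gives the drift term $(\Theta\widehat\zeta_t + \upsilon)\,dt$.

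Next I would invoke the martingale representation theorem with respect to the filtration $\mathbb F^Z$: since (by the previous proposition) $\mathbb F^Z$ is the natural filtration of the $\mathbb F^Z$-Brownian motion $(\widehat W^Z_t)_t$, the $\mathbb F^Z$-martingale above can be written as $\int_0^t \psi_s\, d\widehat W^Z_s$ for some $\mathbb F^Z$-predictable, square-integrable integrand $(\psi_t)_t$. It remains only to identify $\psi_t$. For this I would compute the quadratic covariation between $\widehat\zeta_t$ and $\widehat W^Z_t$ in two ways. On one hand, from the representation just obtained, $d\langle \widehat\zeta, \widehat W^Z\rangle_t = \psi_t\, dt$. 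On the other hand, one can compute $d\langle \widehat\zeta, \widehat W^Z\rangle_t$ directly: write $\widehat\zeta_t = \mathbb E[\zeta_t\mid \mathcal F^Z_t]$ and use that, within the ambient Gaussian space, $\zeta_t - \widehat\zeta_t$ is orthogonal to (hence independent of) $\mathcal F^Z_t$, so the bracket picks up only the ``observable'' part; combining the definition $\widehat W^Z_t = \Sigma_Z^{-\frac12}(Z_t - \int_0^t \Gamma(Z_s, \widehat\zeta_s)^\intercal ds)$ with the dynamics of $Z$ and $\zeta$ and the correlation structure $\rho$, the cross-variation $d\langle \zeta, W^Z\rangle_t$ contributes the $\Sigma_\zeta^{\frac12}\tilde\rho^\intercal$ term, while the projection of the $Z$-driven noise contributes the $\binom{0}{\mathbb V(\zeta_t\mid\mathcal F^Z_t)}^\intercal \Gamma^\intercal \Sigma_Z^{-\frac12}$ term, yielding the claimed formula for $\psi_t$.

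An alternative, perhaps cleaner, route for the identification step: consider the $\mathbb R^{k+d}$-valued Gaussian process $(Z_t, \zeta_t)_t$, observe that $\widehat\zeta_t$ is its linear conditional expectation, and obtain $\psi_t$ from the classical Kalman–Bucy gain formula $\psi_t = \big(\mathrm{Cov}(\zeta_t, Z_t\text{-innovation})\big)\Sigma_Z^{-\frac12}$, which after substituting the innovation representation reduces to the same expression. The main obstacle I anticipate is precisely this identification of $\psi_t$: one has to handle the correlation $\tilde\rho$ between $W^Z$ and $W^\zeta$ carefully, so that the ``direct'' noise contribution $\Sigma_\zeta^{\frac12}\tilde\rho^\intercal$ and the ``filtered feedback'' contribution through $\mathbb V(\zeta_t\mid\mathcal F^Z_t)$ are combined with the correct signs and matrix orientations. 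The drift term, by contrast, is essentially immediate from Lemma~\ref{lemma_martingale} and linearity of conditional expectation, and the square-integrability/well-posedness conditions needed to apply the lemma are trivially satisfied in this Gaussian setting.
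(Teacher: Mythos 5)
Your proposal is correct, and its first two steps (Lemma~\ref{lemma_martingale} applied to $\zeta_t-\int_0^t(\Theta\zeta_s+\upsilon)ds$ to get the drift, then the martingale representation theorem in the filtration $\mathbb F^Z$, which is legitimate precisely because the previous proposition shows $\mathbb F^Z$ is generated by $(\widehat W^Z_t)_t$) coincide with the paper's proof. Where you differ is the identification of $\psi_t$: the paper fixes an arbitrary test matrix $M\in\mathcal M_{k,d}(\mathbb R)$, computes the $\mathbb F$-drift of $Z_t^\intercal M\zeta_t$, projects it with Lemma~\ref{lemma_martingale}, computes the $\mathbb F^Z$-drift of $Z_t^\intercal M\widehat\zeta_t$ from the represented SDE, and equates the two to obtain a trace identity $\mathrm{Tr}\bigl(M\psi_t\Sigma_Z^{\frac12}\bigr)=\mathrm{Tr}\bigl(M\bigl(\begin{pmatrix}0\\ \mathbb V(\zeta_t|\mathcal F_t^Z)\end{pmatrix}^\intercal\Gamma^\intercal+\Sigma_\zeta^{\frac12}\tilde\rho^\intercal\Sigma_Z^{\frac12}\bigr)\bigr)$ valid for all $M$, hence the formula. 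You instead identify $\psi_t$ as $d\langle\widehat\zeta,\widehat W^Z\rangle_t/dt$ (equivalently via the Kalman--Bucy gain), which is the more standard filtering-textbook phrasing and is shorter to state; note, however, that making the ``direct'' bracket computation rigorous requires exactly the same mechanism as the paper, applied to the product $\zeta_t\widehat W_t^{Z\intercal}$ instead of $Z_t^\intercal M\zeta_t$: compute its $\mathbb F$-drift (the cross-variation $d\langle\zeta,\widehat W^Z\rangle_t=\Sigma_\zeta^{\frac12}\tilde\rho^\intercal dt$ plus the drift correction $\zeta_t\begin{pmatrix}0\\ \zeta_t-\widehat\zeta_t\end{pmatrix}^\intercal\Gamma^\intercal\Sigma_Z^{-\frac12}$ coming from the definition of $\widehat W^Z$), project with Lemma~\ref{lemma_martingale} using $\mathbb E[\zeta_t(\zeta_t-\widehat\zeta_t)^\intercal|\mathcal F_t^Z]=\mathbb V(\zeta_t|\mathcal F_t^Z)$, and compare with the drift of $\widehat\zeta_t\widehat W_t^{Z\intercal}$ computed from the represented SDE. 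Your appeal to the Gaussian structure (orthogonality, hence independence, of $\zeta_t-\widehat\zeta_t$ from $\mathcal F_t^Z$) is sound here and mirrors the paper's remark on the Gaussian space; your Kalman--Bucy gain indeed reduces to $\begin{pmatrix}0\\ \mathbb V(\zeta_t|\mathcal F_t^Z)\end{pmatrix}^\intercal\Gamma^\intercal\Sigma_Z^{-\frac12}+\Sigma_\zeta^{\frac12}\tilde\rho^\intercal$ once the correlated-noise term is included, as you anticipated. In short: same skeleton, a cleaner-looking but equivalent identification step; the paper's arbitrary-$M$ bookkeeping avoids discussing brackets of conditional-expectation processes, while your route connects more directly to classical Kalman--Bucy filtering.
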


\begin{proof}
Using Lemma \ref{lemma_martingale}, the process 
$$\left(\widehat \zeta_t - \displaystyle \int_0^t (\Theta\widehat \zeta_s  + \upsilon)ds \right)_t$$
is a square-integrable $\mathbb F^Z$-martingale and, therefore, from the martingale representation theorem, there exists an $\mathbb F^Z$-predictable process $(\psi_t)_t$ with values in $\mathcal M_{d,k}(\mathbb R)$ such that for all $T>0$, 
$$\mathbb E \left[\int_0^T \|\psi_t\|^2 dt \right] <+\infty$$
and the process $(\widehat \zeta_t)_t$ is solution of the stochastic differential equation
\begin{equation*}
    d\widehat \zeta_t = (\Theta\widehat \zeta_t  + \upsilon) dt +  \psi_t d \widehat W^Z_t.
\end{equation*}

Let $M\in \mathcal M_{k,d}(\mathbb R)$. We have
\begin{align*}
    d(Z_t^\intercal M\zeta_t) &= dZ_t^\intercal M\zeta_t + Z_t^\intercal Md\zeta_t + \text{Tr}\left(M{\Sigma_\zeta^{\frac 12}}\tilde \rho^\intercal {\Sigma_Z^{\frac 12}} \right) dt\\
    &= \left( \begin{pmatrix} Z_t\\ \zeta_t \end{pmatrix}^\intercal \Gamma^\intercal M\zeta_t + Z_t^\intercal M (\Theta\zeta_t+\upsilon) + \text{Tr}\left(M{\Sigma_\zeta^{\frac 12}}\tilde \rho^\intercal {\Sigma_Z^{\frac 12}}\right)\right) dt + Z_t^\intercal M{\Sigma_\zeta^{\frac 12}} dW_t^\zeta + \zeta^\intercal_tM^\intercal{\Sigma_Z^{\frac 12}} {dW_t^Z},
\end{align*}
so the process $$\left(Z_t^\intercal M\zeta_t - \displaystyle \int_0^t \left(  \begin{pmatrix} Z_s\\ \zeta_s \end{pmatrix}^\intercal \Gamma^\intercal M\zeta_s + Z_s^\intercal M (\Theta\zeta_s+\upsilon) + \text{Tr}\left(M{\Sigma_\zeta^{\frac 12}}\tilde \rho^\intercal {\Sigma_Z^{\frac 12}}\right)\right) ds  \right)_t$$ is an $\mathbb F$-martingale and, by Lemma \ref{lemma_martingale}, the process
$$\left(Z_t^\intercal M\widehat\zeta_t - \displaystyle \int_0^t \left(   \mathbb E\left[\left.\begin{pmatrix} Z_s\\ \zeta_s \end{pmatrix}^\intercal \Gamma^\intercal M\zeta_s\right| \mathcal F_s^Z \right] + Z_s^\intercal M (\Theta\widehat\zeta_s+\upsilon) + \text{Tr}\left(M{\Sigma_\zeta^{\frac 12}}\tilde \rho^\intercal {\Sigma_Z^{\frac 12}}\right)\right) ds  \right)_t$$ is an $\mathbb F^Z$-martingale.\\

On the other hand,
\begin{align*}
    d(Z_t^\intercal M\widehat \zeta_t) &= dZ_t^\intercal M\widehat \zeta_t +  Z_t^\intercal Md\widehat \zeta_t + \text{Tr}\left(M \psi_t {\Sigma_Z^{\frac 12}}  \right) dt\\
    &=  \left(\begin{pmatrix} Z_t\\ \widehat \zeta_t \end{pmatrix}^\intercal \Gamma^\intercal M\widehat \zeta_t  + Z_t^\intercal M (\Theta\widehat \zeta_t+\upsilon) +  \text{Tr}\left(M\psi_t{\Sigma_Z^{\frac 12}} \right) \right) dt + \left(Z_t^\intercal M\psi_t + \widehat \zeta_t^\intercal M^\intercal {\Sigma_Z^{\frac 12}} \right) d\widehat W_t^Z 
\end{align*}
so the process
$$\left(Z_t^\intercal M\widehat\zeta_t - \displaystyle \int_0^t \left(   \begin{pmatrix} Z_s\\ \widehat \zeta_s \end{pmatrix}^\intercal \Gamma^\intercal M\widehat \zeta_s + Z_s^\intercal M (\Theta\widehat\zeta_s+\upsilon) + \text{Tr}\left(M\psi_s{\Sigma_Z^{\frac 12}} \right)\right) ds  \right)_t$$ is also an $\mathbb F^Z$-martingale.\\

Therefore, for all $M\in \mathcal M_{k,d}(\mathbb R)$, we have
\begin{eqnarray*}
\text{Tr}\left(M\psi_t{\Sigma_Z^{\frac 12}} \right) &=& \mathbb E\left[\left.\begin{pmatrix} Z_t\\ \zeta_t \end{pmatrix}^\intercal \Gamma^\intercal M\zeta_t\right| \mathcal F_t^Z \right] - \begin{pmatrix} Z_t\\ \widehat \zeta_t \end{pmatrix}^\intercal \Gamma^\intercal M\widehat \zeta_t + \text{Tr}\left(M{\Sigma_\zeta^{\frac 12}}\tilde \rho^\intercal  {\Sigma_Z^{\frac 12}}\right)\\
&=&\text{Tr}\left(\mathbb E\left[\left.\begin{pmatrix} Z_t\\ \zeta_t \end{pmatrix}^\intercal \Gamma^\intercal M\zeta_t\right| \mathcal F_t^Z \right]\right) - \text{Tr} \left(\begin{pmatrix} Z_t\\ \widehat \zeta_t \end{pmatrix}^\intercal \Gamma^\intercal M\widehat \zeta_t\right) + \text{Tr}\left(M{\Sigma_\zeta^{\frac 12}}\tilde \rho^\intercal  {\Sigma_Z^{\frac 12}} \right)\\
&=& \text{Tr}\left(M \left( \mathbb E\left[\left.\zeta_t\begin{pmatrix} Z_t\\ \zeta_t \end{pmatrix}^\intercal -  \widehat \zeta_t\begin{pmatrix} Z_t\\ \widehat \zeta_t \end{pmatrix}^\intercal \right|\mathcal F_t^Z \right] \Gamma^\intercal + {\Sigma_\zeta^{\frac 12}}\tilde \rho^\intercal  {\Sigma_Z^{\frac 12}}\right)  \right)\\
&=& \text{Tr}\left(M \left( \begin{pmatrix} 0\\ \mathbb V( \zeta_t |\mathcal F_t^Z)\end{pmatrix}^\intercal \Gamma^\intercal + {\Sigma_\zeta^{\frac 12}}\tilde \rho^\intercal  {\Sigma_Z^{\frac 12}}\right)  \right).\\
\end{eqnarray*}

Therefore, we must have 
$$\psi_t = \begin{pmatrix} 0\\ \mathbb V( \zeta_t |\mathcal F_t^Z)\end{pmatrix}^\intercal \Gamma^\intercal {\Sigma_Z^{-\frac 12}}+ {\Sigma_\zeta^{\frac 12}}\tilde \rho^\intercal.$$
\end{proof}

The above proposition involves the conditional variance-covariance term $\mathbb V( \zeta_t |\mathcal F_t^Z)$ that needs to be computed. This is the purpose of the following proposition.

\begin{prop}
Let $V_t = \mathbb V( \zeta_t |\mathcal F_t^Z)$.\\

$(V_t)_t$ is a deterministic process satisfying the following matrix Riccati diffential equation:
$$\frac{dV_t}{dt} = \Theta V_t + V_t \Theta^\intercal + \Sigma_\zeta - \left(\begin{pmatrix} 0\\ V_t\end{pmatrix}^\intercal \Gamma^\intercal {\Sigma_Z^{-\frac 12}}+ {\Sigma_\zeta^{\frac 12}}\tilde \rho^\intercal\right)\left(\begin{pmatrix} 0\\ V_t\end{pmatrix}^\intercal \Gamma^\intercal {\Sigma_Z^{-\frac 12}}+ {\Sigma_\zeta^{\frac 12}}\tilde \rho^\intercal\right)^\intercal.$$   
\end{prop}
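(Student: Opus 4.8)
The plan is to write $V_t = P_t - \widehat\zeta_t\widehat\zeta_t^\intercal$, where $P_t := \mathbb{E}[\zeta_t\zeta_t^\intercal \mid \mathcal{F}_t^Z]$, compute the $\mathbb{F}^Z$-semimartingale decomposition of each piece (using Lemma \ref{lemma_martingale} for $P_t$ and Proposition \ref{zetachapo} for $\widehat\zeta_t$), subtract, and then invoke the Gaussian structure to collapse the resulting finite-variation $\mathbb{F}^Z$-martingale into an ODE.

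First I would establish that $V_t$ is deterministic. Writing $e_t = \zeta_t - \widehat\zeta_t$, one has $V_t = \mathbb{E}[e_t e_t^\intercal \mid \mathcal{F}_t^Z]$. By the Remark preceding the statement, the linear span of $\{Z_s, \zeta_s : s \ge 0\}$ is a Gaussian space; since $\widehat\zeta_t$ is the $L^2$-projection of $\zeta_t$ onto the closed subspace generated by $\mathcal{F}_t^Z$, the residual $e_t$ is orthogonal to, hence — being jointly Gaussian — independent of $\mathcal{F}_t^Z$. Therefore $V_t = \mathbb{E}[e_t e_t^\intercal]$ is a deterministic, continuous, matrix-valued function of $t$, with $V_0 = \mathbb{V}(\zeta_0)$; in particular the gain matrix $\psi_t$ of Proposition \ref{zetachapo} is deterministic as well.

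Next, applying Itô's formula entrywise to $\zeta_t\zeta_t^\intercal$ using $d\zeta_t = (\Theta\zeta_t + \upsilon)\,dt + \Sigma_\zeta^{\frac{1}{2}}\,dW^\zeta_t$, the process $\zeta_t\zeta_t^\intercal - \int_0^t(\Theta\zeta_s\zeta_s^\intercal + \zeta_s\zeta_s^\intercal\Theta^\intercal + \upsilon\zeta_s^\intercal + \zeta_s\upsilon^\intercal + \Sigma_\zeta)\,ds$ is an $\mathbb{F}$-martingale; the integrability hypothesis of Lemma \ref{lemma_martingale} holds because a linear SDE with constant coefficients has locally bounded second moments (Gronwall). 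Lemma \ref{lemma_martingale}, applied entrywise together with $\mathbb{E}[\zeta_s \mid \mathcal{F}_s^Z] = \widehat\zeta_s$, then yields that $P_t - \int_0^t(\Theta P_s + P_s\Theta^\intercal + \upsilon\widehat\zeta_s^\intercal + \widehat\zeta_s\upsilon^\intercal + \Sigma_\zeta)\,ds$ is an $\mathbb{F}^Z$-martingale. In parallel, applying Itô to $\widehat\zeta_t\widehat\zeta_t^\intercal$ directly through the $\mathbb{F}^Z$-dynamics $d\widehat\zeta_t = (\Theta\widehat\zeta_t + \upsilon)\,dt + \psi_t\,d\widehat W^Z_t$ of Proposition \ref{zetachapo} (recall $\widehat W^Z$ is an $\mathbb{F}^Z$-Brownian motion) shows that $\widehat\zeta_t\widehat\zeta_t^\intercal - \int_0^t(\Theta\widehat\zeta_s\widehat\zeta_s^\intercal + \widehat\zeta_s\widehat\zeta_s^\intercal\Theta^\intercal + \upsilon\widehat\zeta_s^\intercal + \widehat\zeta_s\upsilon^\intercal + \psi_s\psi_s^\intercal)\,ds$ is an $\mathbb{F}^Z$-martingale. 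Subtracting, $V_t - \int_0^t(\Theta V_s + V_s\Theta^\intercal + \Sigma_\zeta - \psi_s\psi_s^\intercal)\,ds$ is an $\mathbb{F}^Z$-martingale; by the first step it is deterministic, hence constant, so $V$ is differentiable with $\frac{dV_t}{dt} = \Theta V_t + V_t\Theta^\intercal + \Sigma_\zeta - \psi_t\psi_t^\intercal$, and substituting $\psi_t = \begin{pmatrix} 0 \\ V_t\end{pmatrix}^\intercal\Gamma^\intercal\Sigma_Z^{-\frac{1}{2}} + \Sigma_\zeta^{\frac{1}{2}}\tilde\rho^\intercal$ from Proposition \ref{zetachapo} produces the stated matrix Riccati equation.

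The main obstacle is the determinism of $V_t$ in the first step: it is precisely what lets the finite-variation $\mathbb{F}^Z$-martingale of the last step be read off as an ODE rather than an SDE, and it rests squarely on the Gaussian framework. A secondary point of care is to keep every semimartingale decomposition on the filtration $\mathbb{F}^Z$: the representation of $\widehat\zeta$ in Proposition \ref{zetachapo} is valid as an $\mathbb{F}^Z$-decomposition only, so one should handle $P_t$ and $\widehat\zeta_t\widehat\zeta_t^\intercal$ separately, as above, rather than forming an SDE for $e_t$ that mixes the $\mathbb{F}$-Brownian motion $W^\zeta$ with the innovation process $\widehat W^Z$, which is a Brownian motion for $\mathbb{F}^Z$ but not for $\mathbb{F}$.
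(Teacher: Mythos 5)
Your proof is correct, but it disposes of the martingale part of $dV_t$ by a genuinely different mechanism than the paper. Up to the subtraction step the two arguments coincide: both apply It\^o and Lemma \ref{lemma_martingale} to $\zeta_t\zeta_t^\intercal$, It\^o and Proposition \ref{zetachapo} to $\widehat\zeta_t\widehat\zeta_t^\intercal$, and subtract to find that $V_t - \int_0^t\left(\Theta V_s + V_s\Theta^\intercal + \Sigma_\zeta - \psi_s\psi_s^\intercal\right)ds$ is an $\mathbb F^Z$-martingale. The paper then keeps the Brownian integrands explicit (the processes $\phi^{i,j}$ from the representation theorem), runs an auxiliary It\^o computation on $\zeta^i_t\zeta^j_tZ_t$, and uses the Gaussian third-moment (Wick) identity to prove $\phi^{i,j}_t = \psi_t^\intercal\left(\widehat\zeta^i_te^j + \widehat\zeta^j_te^i\right)$, i.e.\ that the martingale part of $dV_t$ vanishes identically. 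You instead argue a priori, via the Gaussian-space remark, that the estimation error $\zeta_t-\widehat\zeta_t$ is independent of $\mathcal F^Z_t$, so that $V_t$ --- and hence the gain $\psi_t$ of Proposition \ref{zetachapo} --- is deterministic, and a deterministic martingale is constant, which turns the decomposition directly into the Riccati ODE. This is the classical Kalman--Bucy argument: it is shorter, avoids conditional third moments entirely, and makes the determinism of $V$ (which the paper only obtains as a by-product) transparent; the paper's route buys the explicit cancellation of the stochastic integrands without having to identify the conditional expectation with the affine projection. Two small points to phrase carefully: independence of the error should be deduced from its lack of correlation with the generators $(Z_s)_{s\le t}$ of $\mathcal F^Z_t$, with which it is jointly Gaussian (equivalently, use the standard fact that for jointly Gaussian families the conditional expectation is the affine projection), rather than from orthogonality to all of $L^2(\mathcal F^Z_t)$, which is not a Gaussian family; and a word on the continuity and local boundedness of $t\mapsto V_t$ is needed so that the constancy of the martingale can legitimately be differentiated into the ODE. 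Both are routine and consistent with the paper's level of rigor.
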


\begin{proof}
Applying Ito's formula, we get
\begin{eqnarray*}
d(\zeta_t\zeta_t^\intercal) &=& (\Theta\zeta_t\zeta_t^\intercal+\zeta_t\zeta_t^\intercal\Theta^\intercal+\upsilon\zeta_t^\intercal + \zeta_t \upsilon^\intercal + \Sigma_\zeta) dt + {\Sigma_\zeta^{\frac 12}} dW^\zeta_t\zeta_t^\intercal + \zeta_t{dW^\zeta_t}^\intercal{\Sigma_\zeta^{\frac 12}}.
\end{eqnarray*}
Therefore, for each couple $(i,j) \in \{1, \ldots d\}^2$,\footnote{$(e^1, \ldots, e^d)$ denotes the canonical basis of $\mathbb R^d$.} we have
\begin{eqnarray*}
d(\zeta^i_t\zeta^j_t) &=& ({e^i}^\intercal\Theta\zeta_t\zeta^j_t+\zeta^i_t\zeta_t^\intercal\Theta^\intercal e^j+\upsilon^i\zeta^j_t + \zeta^i_t \upsilon^j + \Sigma^{i,j}_\zeta) dt + \left(\Sigma_\zeta^{\frac 12}\left(\zeta_t^i e^j+ \zeta_t^j e^i \right)\right)^\intercal dW^\zeta_t.
\end{eqnarray*}
The process $$\left(\zeta^i_t\zeta^j_t - \int_0^t ({e^i}^\intercal\Theta\zeta_s\zeta^j_s+\zeta^i_s\zeta_s^\intercal\Theta^\intercal e^j+\upsilon^i\zeta^j_s + \zeta^i_s \upsilon^j + \Sigma^{i,j}_\zeta)\right)_t$$ is thus an $\mathbb F$-martingale, and therefore, using Lemma \ref{lemma_martingale}, the process $$\left(\mathbb E\left[\left.\zeta^i_t\zeta^j_t\right|\mathcal F_t^Z \right] - \int_0^t \left({e^i}^\intercal\Theta\mathbb E\left[\left.\zeta_s\zeta^j_s\right|\mathcal F_s^Z \right]+\mathbb E\left[\left.\zeta^i_s\zeta_s^\intercal\right|\mathcal F_s^Z \right]\Theta^\intercal e^j+\upsilon^i\widehat \zeta^j_s + \widehat \zeta^i_s \upsilon^j + \Sigma^{i,j}_\zeta\right) ds \right)_t$$
is a square-integrable $\mathbb F^Z$-martingale. From the martingale representation theorem, there exists for each couple $ (i,j) \in \{1, \ldots d\}^2$ an $\mathbb F^Z$-predictable process $(\phi^{i,j}_t)_t$ with value in $\mathbb R^k$ such that for all $T>0$, 
$$\mathbb E \left[\int_0^T \|\phi^{i,j}_t\|^2 dt \right] <+\infty$$
and
\begin{eqnarray}
d\mathbb E\left[\left.\zeta^i_t\zeta^j_t\right|\mathcal F_t^Z \right]\! &=&\! \left({e^i}^\intercal\Theta\mathbb E\left[\left.\zeta_t\zeta^j_t\right|\mathcal F_t^Z \right]+\mathbb E\left[\left.\zeta^i_t\zeta_t^\intercal\right|\mathcal F_t^Z \right]\Theta^\intercal e^j+\upsilon^i\widehat \zeta^j_t + \widehat \zeta^i_t \upsilon^j + \Sigma^{i,j}_\zeta\right) dt + {\phi^{i,j}_t}^\intercal d\widehat W^Z_t.\label{eofsquare}
\end{eqnarray}

Now, using Proposition \ref{zetachapo} and Ito's formula, we have
\begin{eqnarray*}
d(\widehat\zeta_t\widehat\zeta_t^\intercal) &=& \left(\Theta\widehat\zeta_t\widehat\zeta_t^\intercal+\widehat\zeta_t\widehat\zeta_t^\intercal\Theta^\intercal+\upsilon\widehat\zeta_t^\intercal + \widehat\zeta_t \upsilon^\intercal + \psi_t\psi_t^\intercal\right) dt + \psi_t d\widehat W^Z_t\widehat\zeta_t^\intercal + \widehat\zeta_t{d\widehat W^Z_t}^\intercal\psi_t^\intercal,
\end{eqnarray*}
so, for each couple $ (i,j) \in \{1, \ldots d\}^2$, we get
\begin{eqnarray}
d(\widehat\zeta^i_t\widehat\zeta^j_t) &=& \left({e^i}^\intercal\Theta\widehat\zeta_t\widehat\zeta^j_t+\widehat\zeta^i_t\widehat\zeta_t^\intercal\Theta^\intercal e^j+\upsilon^i\widehat\zeta^j_t + \widehat\zeta^i_t \upsilon^j + {e^i}^\intercal\psi_t\psi_t^\intercal e^j\right) dt + \left(\psi^\intercal_t \left(\widehat\zeta_t^i e^j+ \widehat\zeta_t^j e^i \right)\right)^\intercal d\widehat W^Z_t. \label{esquared}
\end{eqnarray}

Subtracting Eq. \eqref{esquared} from Eq. \eqref{eofsquare}, we get $\forall (i,j) \in \{1, \ldots d\}^2$,
$$d\nu^{i,j}_t = \left( {e^i}^\intercal\Theta V_t e^j + {e^i}^\intercal V_t \Theta^\intercal e^j + \Sigma^{i,j}_\zeta - {e^i}^\intercal\psi_t\psi_t^\intercal e^j\right) dt + \left(\phi^{i,j}_t - \psi^\intercal_t \left(\widehat\zeta_t^i e^j+ \widehat\zeta_t^j e^i \right)\right)^\intercal d\widehat W^Z_t,$$
i.e.
$${e^i}^\intercal\left( dV_t - \left( \Theta V_t + V_t \Theta^\intercal + \Sigma_\zeta - \psi_t\psi_t^\intercal\right) dt\right)e^j = \left(\phi^{i,j}_t - \psi^\intercal_t \left(\widehat\zeta_t^i e^j+ \widehat\zeta_t^j e^i \right)\right)^\intercal d\widehat W^Z_t.$$

Given the definition of $\psi_t$, the statement of the proposition is equivalent to $$\forall (i,j) \in \{1, \ldots d\}^2, \phi^{i,j}_t = \psi^\intercal_t \left(\widehat\zeta_t^i e^j+ \widehat\zeta_t^j e^i \right).$$

To prove the latter, let us apply Ito's formula to get
\begin{eqnarray*}
d(\zeta^i_t\zeta^j_tZ_t) &=& ({e^i}^\intercal\Theta\zeta_t\zeta^j_t+\zeta^i_t\zeta_t^\intercal\Theta^\intercal e^j+\upsilon^i\zeta^j_t + \zeta^i_t \upsilon^j + \Sigma^{i,j}_\zeta) Z_t dt + Z_t\left(\Sigma_\zeta^{\frac 12}\left(\zeta_t^i e^j+ \zeta_t^j e^i \right)\right)^\intercal dW^\zeta_t\\
&& + \zeta^i_t\zeta^j_t \Gamma \begin{pmatrix} Z_t\\ \zeta_t \end{pmatrix} dt + \zeta^i_t\zeta^j_t {\Sigma_Z^{\frac 12}} dW^Z_t + {\Sigma_Z^{\frac 12}}\tilde \rho\Sigma_\zeta^{\frac 12}\left(\zeta_t^i e^j+ \zeta_t^j e^i \right) dt.
\end{eqnarray*}

We deduce that the process 
$$\left(\zeta^i_t\zeta^j_tZ_t - \int_0^t \left( \left({e^i}^\intercal\Theta\zeta_s\zeta^j_s+\zeta^i_s\zeta_s^\intercal\Theta^\intercal e^j+\upsilon^i\zeta^j_s + \zeta^i_s \upsilon^j + \Sigma^{i,j}_\zeta\right) Z_s + \zeta^i_s\zeta^j_s \Gamma \begin{pmatrix} Z_s\\ \zeta_s \end{pmatrix} + {\Sigma_Z^{\frac 12}}\tilde \rho\Sigma_\zeta^{\frac 12}\left(\zeta_s^i e^j+ \zeta_s^j e^i \right)  \right) ds\right)_t$$
is an $\mathbb F$-martingale, and therefore, using Lemma \ref{lemma_martingale}, the process
$$\Bigg(\mathbb E\left[\left.\zeta^i_t\zeta^j_t\right|\mathcal F_t^Z \right]Z_t - \int_0^t \Big( \Big({e^i}^\intercal\Theta\mathbb E\left[\left.\zeta_s\zeta^j_s\right|\mathcal F_s^Z \right]+\mathbb E\left[\left.\zeta^i_s\zeta_s^\intercal\right|\mathcal F_s^Z \right]\Theta^\intercal e^j+\upsilon^i\widehat\zeta^j_s + \widehat\zeta^i_s \upsilon^j + \Sigma^{i,j}_\zeta\Big) Z_s$$
$$ + \mathbb E\left[\left.\zeta^i_s\zeta^j_s \Gamma \begin{pmatrix} Z_s\\ \zeta_s \end{pmatrix}\right|\mathcal F_s^Z \right] + {\Sigma_Z^{\frac 12}}\tilde \rho\Sigma_\zeta^{\frac 12}\left(\widehat\zeta_s^i e^j+ \widehat\zeta_s^j e^i \right)  \Big) ds\Bigg)_t$$
is an $\mathbb F^Z$-martingale.\\

On the other hand, using Eq. \eqref{eofsquare} and Ito's formula, we get

\begin{eqnarray*}
d\left(\mathbb E\left[\left.\zeta^i_t\zeta^j_t\right|\mathcal F_t^Z \right]Z_t\right) &=& \left({e^i}^\intercal\Theta\mathbb E\left[\left.\zeta_t\zeta^j_t\right|\mathcal F_t^Z \right]+\mathbb E\left[\left.\zeta^i_t\zeta_t^\intercal\right|\mathcal F_t^Z \right]\Theta^\intercal e^j+\upsilon^i\widehat \zeta^j_t + \widehat \zeta^i_t \upsilon^j + \Sigma^{i,j}_\zeta\right) Z_t dt + Z_t {\phi^{i,j}_t}^\intercal d\widehat W^Z_t\\
&& + \mathbb E\left[\left.\zeta^i_t\zeta^j_t\right|\mathcal F_t^Z \right] \Gamma \begin{pmatrix} Z_t\\ \widehat\zeta_t \end{pmatrix} dt + E\left[\left.\zeta^i_t\zeta^j_t\right|\mathcal F_t^Z \right] {\Sigma_Z^{\frac 12}} d\widehat W^Z_t + {\Sigma_Z^{\frac 12}}\phi^{i,j}_t dt.
\end{eqnarray*}
Therefore, the process
$$\Bigg(\mathbb E\left[\left.\zeta^i_t\zeta^j_t\right|\mathcal F_t^Z \right]Z_t - \int_0^t \Big( \Big({e^i}^\intercal\Theta\mathbb E\left[\left.\zeta_s\zeta^j_s\right|\mathcal F_s^Z \right]+\mathbb E\left[\left.\zeta^i_s\zeta_s^\intercal\right|\mathcal F_s^Z \right]\Theta^\intercal e^j+\upsilon^i\widehat\zeta^j_s + \widehat\zeta^i_s \upsilon^j + \Sigma^{i,j}_\zeta\Big) Z_s$$
$$ + \mathbb E\left[\left.\zeta^i_s\zeta^j_s\right|\mathcal F_s^Z \right] \Gamma \begin{pmatrix} Z_s\\ \widehat\zeta_s \end{pmatrix} + {\Sigma_Z^{\frac 12}} \phi^{i,j}_s  \Big) ds\Bigg)_t$$
is also an $\mathbb F^Z$-martingale.\\

We deduce that
$$ \mathbb E\left[\left.\zeta^i_t\zeta^j_t\right|\mathcal F_t^Z \right] \Gamma \begin{pmatrix} Z_t\\ \widehat\zeta_t \end{pmatrix} + {\Sigma_Z^{\frac 12}} \phi^{i,j}_t = \mathbb E\left[\left.\zeta^i_t\zeta^j_t \Gamma \begin{pmatrix} Z_t\\ \zeta_t \end{pmatrix}\right|\mathcal F_t^Z \right] + {\Sigma_Z^{\frac 12}}\tilde \rho\Sigma_\zeta^{\frac 12}\left(\widehat\zeta_t^i e^j+ \widehat\zeta_t^j e^i \right),$$
i.e.
$$\phi^{i,j}_t = {\Sigma_Z^{-\frac 12}}\left(\Gamma \begin{pmatrix} 0\\  \mathbb E\left[\left.\zeta^i_t\zeta^j_t\zeta_t \right|\mathcal F_t^Z \right] - \mathbb E\left[\left.\zeta^i_t\zeta^j_t \right|\mathcal F_t^Z \right]\widehat\zeta_t  \end{pmatrix} + {\Sigma_Z^{\frac 12}}\tilde \rho\Sigma_\zeta^{\frac 12}\left(\widehat\zeta_t^i e^j+ \widehat\zeta_t^j e^i \right)\right).$$
As we are in a Gaussian space, the distribution of $\zeta_t$ under $\mathbb{P}^{\mathcal F_t^Z}$ is Gaussian, and  we have\footnote{If $(\xi_1,\xi_2, \xi_3)$ is a Gaussian vector, then $$\mathbb E[\xi_1\xi_2\xi_3] = \mathbb E[\xi_1\xi_2]\mathbb E[\xi_3] + \mathbb E[\xi_1\xi_3]\mathbb E[\xi_2] + \mathbb E[\xi_2\xi_3]\mathbb E[\xi_1] - 2 \mathbb{E}[\xi_1]\mathbb{E}[\xi_2]\mathbb{E}[\xi_3].$$ This result follows easily from the fact that $\mathbb E\left[\left(\xi_1 - \mathbb E[\xi_1]\right)\left(\xi_2 - \mathbb E[\xi_2]\right)\left(\xi_3 - \mathbb E[\xi_3]\right)\right] = 0$ using a symmetry argument between $(\xi_1,\xi_2, \xi_3)$ and $(-\xi_1,-\xi_2, -\xi_3)$. This is a special case of Wick's probability theorem.}
$$\mathbb E\left[\left.\zeta^i_t\zeta^j_t\zeta_t \right|\mathcal F_t^Z \right] = \mathbb E\left[\left.\zeta^i_t\zeta^j_t\right|\mathcal F_t^Z \right]\widehat \zeta_t + \mathbb E\left[\left.\zeta^i_t\zeta_t\right|\mathcal F_t^Z \right]\widehat \zeta^j_t + \mathbb E\left[\left.\zeta^j_t\zeta_t\right|\mathcal F_t^Z \right]\widehat \zeta^i_t - 2 \widehat\zeta^i_t\widehat\zeta^j_t\widehat\zeta_t,$$
hence
\begin{eqnarray*}
\phi^{i,j}_t &=& {\Sigma_Z^{-\frac 12}}\Gamma \begin{pmatrix} 0\\  \mathbb E\left[\left.\zeta^i_t\zeta_t\right|\mathcal F_t^Z \right]\widehat \zeta^j_t + \mathbb E\left[\left.\zeta^j_t\zeta_t\right|\mathcal F_t^Z \right]\widehat \zeta^i_t - 2 \widehat\zeta^i_t\widehat\zeta^j_t\widehat\zeta_t   \end{pmatrix} + \tilde \rho\Sigma_\zeta^{\frac 12}\left(\widehat\zeta_t^i e^j+ \widehat\zeta_t^j e^i \right)\\
&=& {\Sigma_Z^{-\frac 12}}\Gamma \begin{pmatrix} 0\\  \mathbb E\left[\left.\zeta_t\zeta'_t\right|\mathcal F_t^Z \right]-  \widehat\zeta_t\widehat\zeta_t   \end{pmatrix}\left(\widehat\zeta_t^i e^j+ \widehat\zeta_t^j e^i \right) + \tilde \rho\Sigma_\zeta^{\frac 12}\left(\widehat\zeta_t^i e^j+ \widehat\zeta_t^j e^i \right)\\
&=& \psi_t'\left(\widehat\zeta_t^i e^j+ \widehat\zeta_t^j e^i \right)
\end{eqnarray*}
which completes the proof.
\end{proof}

\end{document}